\newcommand{\dahntab}[1]{
  \newbox\mybok%
  \setbox\mybok=\hbox{\vbox{
      \begin{tabbing}
        #1
      \end{tabbing}%
    }}

  \newdimen\bokwidth%
  \bokwidth=\wd\mybok%
  \newdimen\myl%
  \myl=\textwidth%
  \divide\myl by 2%
  \divide\bokwidth by -2%
  \advance\myl by\bokwidth%
  \vrule width\myl height 0pt depth 0pt%
  \usebox\mybok%
}
\newtheorem{definition}{Definition}
\newtheorem{lemma}{Lemma}
\newtheorem{example}{Example}
\newtheorem{proposition}{Proposition}
\newtheorem{theorem}{Theorem}
\newtheorem{corollary}{Corollary}
\newtheorem{observation}{Observation}
\begin{document}

\title{A Parametric Worst-Case Approach to Fairness in TU-Cooperative Games.}
\author{Cosmin Bonchi\c{s} \thanks{Department of Computer Science, West University of Timi\c{s}oara,
Bd. V. P\^{a}rvan 4, Timi\c{s}oara, RO-300223, Romania and e-Austria Research Institute, Bd. V. P\^{a}rvan 4, cam. 045
B, Timi\c{s}oara, RO-300223, Romania},  Gabriel Istrate\thanks{Corresponding Author. Center for the Study of Complexity, Babe\c{s}-Bolyai University,
F\^{a}nt\^{a}nele 30, cam. A-14, Cluj Napoca, RO-400294, Romania and
 e-Austria Research Institute, Bd. V. P\^{a}rvan 4, cam. 045
B, Timi\c{s}oara, RO-300223, Romania. email: gabrielistrate@acm.org}}


\maketitle

\begin{abstract}

We propose a parametric family of measures of fairness in allocations of TU-cooperative games. Their definition is based 
on generalized R\'enyi Entropy, is related to the  Cowell-Kuga generalized entropy indices in welfare economics, 
and aims to parallel the spirit of the notion of price of anarchy in the case of convex TU-cooperative games.

Since computing these indices is NP-complete in general, we first upper bound the performance of a ``reverse greedy'' algorithm 
for approximately computing worst-case fairness. The result provides a general additive error guarantee in terms of two (problem dependent) packing constants. 
We then particularize this result to the class of induced subset games.
For such games computing worst-case fairness is NP-complete, and the additive guarantee constant can be explicitly computed. 
We compare this result to the performance of an alternate algorithm based on ``biased orientations''. 
\end{abstract}

\section{Introduction}

A central issue in cooperative game theory is coming up with a "fair" allocation of a total cost to a set of
agents that respects coalitional rationality. The 
Shapley value \cite{roth1988shapley} provides the classical, well-understood approach to this problem. But it
is not the only possible solution, nor is it appropriate from all points of view. Indeed, computing the 
Shapley value  is often intractable \cite{deng1994complexity}. Imposing the Shapley value in a distributed 
multiagent setting (though possible in principle) may be even more problematic, as it requires the use of a (centralized) 
mechanism that incentivizes the players to commit to accepting this cost-sharing method. Furthermore, there 
are settings where using the Shapley value may simply be inappropriate. This is, for instance, the case of 
coalitions with a dynamically evolving structure, or that of games in which the Shapley value is not in 
the core (hence it is not immune against coalitional deviations). Finally, the Shapley value may be 
incompatible with the presence of prior social constraints that favor alternative social arrangements 
(e.g. egalitarianism \cite{dutta1989concept}).

We do not mean by the previous discussion that one could replace the Shapley value by any alternative allocation in the core: it simply may 
be the case that no single solution concept is appropriate in all circumstances. 

In noncooperative game theory the seminal work of Roughgarden and Tardos \cite{roughgarden2002bad,selfish-routing,inefficiency-equilibria} 
on the price of anarchy provides a powerful intuition that could have an analog 
in cooperative settings as well. Instead of advocating any particular solution to the problem of equilibrium 
selection, their price of anarchy measure takes a pessimistic perspective, quantifying the degradation in 
overall performance due to uncoordinated behavior, measured on the {\em worst} equilibrium. This circumvents 
the problem of equilibrium selection by providing (pessimistic) guarantees that are valid for {\em any} rational solution. 

In this paper we propose an approach with a similar philosophy for cooperative games. Rather than attempting 
to postulate any particular solution of such a game, we  investigate the fairness of an arbitrary ``rational'' 
cost allocation (where in this paper we define rationality as membership in the core). We propose not a 
single measure, but a parametric family of measures of fairness, based on variations on the concept of 
{\em R\'{e}nyi entropy} and fruitfully used before as measures of inequality in welfare economics 
\cite{CowellKuga1981}. 

The structure of the paper is as follows: in Section 2 we briefly review some relevant concepts in cooperative game theory, 
information theory and combinatorial optimization. We then introduce in Section 3 our parametric family of measures of fairness. 
In Section 4 we first present a ``reverse greedy'' algorithm  for the problem of 
minimizing the worst-case fairness of a convex cooperative game. We give a general result 
on the performance of this algorithm. In section 5 we particularize our discussion to the class of {\em induced subgraph games} of 
Deng and Papadimitriou \cite{deng1994complexity}: first we apply the general result in the previous section to obtain an explicit bound 
on the performance of the reverse greedy algorithm. Then we display a connection with a weighted version of the minimum entropy 
orientation problem from \cite{cardinal2008minimum} that allows us to obtain a bound that is better for some particular values of the parameter $\lambda$. 
 
\section{Preliminaries}\label{preliminaries}

We will work in the framework of Cooperative Game Theory 
(for a recent survey of the relevant literature from a Computer Science perspective see \cite{chalkiadakis2011computational}). 
We will assume knowledge of basic concepts from this literature such as cooperative game, core, concavity/convexity of a cooperative game, 
imputation, Shapley value and so on. 

We need the following notion adapted from \cite{arin2008axiomatic}: 
\begin{definition} 
A {\em solution concept $q$} is a function that assigns to every convex cooperative game $\Gamma$ an imputation $q=q(\Gamma)$. 
It is a {\em core concept} if $q(\Gamma)\in core(\Gamma)$ for every such game $\Gamma$.
\end{definition} 

We will also assume knowledge of computational complexity. A standard recent reference is \cite{arora-barak-book}. 

We recall the notion of R\'{e}nyi entropy of a probability distribution: 

\begin{definition} \cite{cover1991elements}
For any discrete random variable $X$ with probability
mass function $p=(p_{i})_i$ and for any real $\lambda>0,$ $\lambda\neq 1$
the {\em R\'{e}nyi entropy of order $\lambda$ of $X$} is defined as:

\begin{align}
H_{\lambda}(X) & =\frac{1}{1-\lambda}\log_{2}\left(\sum_{i}p_{i}^{\lambda}\right).\label{eq:RenyiEnt}
\end{align}
\end{definition}

We can complete this definition for $\lambda = 1$ by the usual Shannon entropy 
\begin{align}
H(X)=H_{1}(X) & =-\sum_{i} p_{i}\log_2 p_{i}.\label{eq:ShannonEnt}
\end{align}

Without loss of generality we will assume  in the sequel that one can convert any vector of nonnegative values $X$ into a probability distribution 
by normalization. Without risk of confusion we will use the name $X$ for the resulting distribution as well. 

We need the following simple fact, which informally states that stochastic domination leads to lower R\'{e}nyi entropy:
\begin{lemma}
Let $P=(p_1, \dots, p_i, \dots, p_j, \dots, p_n)$ be a probability distribution with $p_1\geq p_2 \geq \dots \geq p_n$. 
Let $1 \leq i < j \leq n$ and $0 < \epsilon \leq p_j.$
Defining $Q=(p_1, \dots, p_i + \epsilon, \dots, p_j - \epsilon, \dots, p_n)$ for any $\lambda > 0$ we have:  
\begin{align}
 H_{\lambda}(P)> H_{\lambda}(Q).
\end{align}
\label{dominate_min_ent}
\end{lemma}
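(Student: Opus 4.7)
The plan is to reduce the inequality to a purely two–variable comparison, since only the $i$-th and $j$-th coordinates of $P$ are affected. Define
\begin{equation*}
g_\lambda(\epsilon) = (p_i+\epsilon)^\lambda + (p_j-\epsilon)^\lambda,
\end{equation*}
viewed as a smooth function of $\epsilon\in [0,p_j]$. Then the sign of $H_\lambda(Q) - H_\lambda(P)$ is controlled entirely by how $g_\lambda(\epsilon)$ compares with $g_\lambda(0) = p_i^\lambda + p_j^\lambda$, because all remaining $p_k^\lambda$ cancel in the sum inside the logarithm of $H_\lambda$.

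Next I would differentiate: $g_\lambda'(\epsilon) = \lambda\bigl[(p_i+\epsilon)^{\lambda-1} - (p_j-\epsilon)^{\lambda-1}\bigr]$. The key observation is that $p_i+\epsilon > p_j-\epsilon$ (since $p_i \geq p_j$ and $\epsilon>0$), so the sign of $g_\lambda'(\epsilon)$ is determined by whether $x \mapsto x^{\lambda-1}$ is increasing or decreasing. Thus for $\lambda>1$ we get $g_\lambda'(\epsilon)>0$ (strictly), and for $0<\lambda<1$ we get $g_\lambda'(\epsilon)<0$ (strictly). Integrating from $0$ to the given $\epsilon$ yields the corresponding strict inequality for $g_\lambda(\epsilon)$ versus $g_\lambda(0)$.

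The final step is careful sign bookkeeping when returning from $\sum_k q_k^\lambda$ to $H_\lambda$. The prefactor $\frac{1}{1-\lambda}$ is negative exactly when $\lambda>1$, which is precisely the regime in which $\sum_k q_k^\lambda$ increases; in the complementary regime $0<\lambda<1$ the prefactor is positive and the sum decreases. In both cases, since $\log_2$ is strictly monotone, one obtains $H_\lambda(Q) < H_\lambda(P)$. For the boundary case $\lambda=1$ I would instead work directly with Shannon entropy: set $\phi(\epsilon) = -(p_i+\epsilon)\log_2(p_i+\epsilon) - (p_j-\epsilon)\log_2(p_j-\epsilon)$ and compute $\phi'(\epsilon) = \log_2\!\bigl(\tfrac{p_j-\epsilon}{p_i+\epsilon}\bigr) < 0$, which again gives $H(Q) < H(P)$.

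The only real obstacle is keeping the case analysis on $\lambda$ clean, since the monotonicity of $x^{\lambda-1}$ and the sign of $\tfrac{1}{1-\lambda}$ flip at the same place, which is what makes the conclusion uniform despite the two opposite-direction estimates on $\sum_k q_k^\lambda$. Everything else (the cancellation of the unaffected coordinates, and the strictness coming from $\epsilon>0$ together with $p_i+\epsilon \neq p_j-\epsilon$) is essentially bookkeeping.
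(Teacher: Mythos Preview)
Your proof is correct and follows essentially the same approach as the paper: both reduce the question to the monotonicity of $x\mapsto x^{\lambda-1}$ via elementary calculus, with the case $\lambda=1$ handled separately. The only cosmetic difference is that the paper studies the monotonicity of $f(x)=(x+\epsilon)^{\lambda}-x^{\lambda}$ in $x$ (and then compares $f(p_i)$ with $f(p_j-\epsilon)$), whereas you differentiate $g(\epsilon)=(p_i+\epsilon)^{\lambda}+(p_j-\epsilon)^{\lambda}$ in $\epsilon$; the underlying computation and sign analysis are identical.
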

When $\lambda = 1$ the result was proved in \cite{AlonOrlitsky_sourcecoding} (see also \cite{cardinal2008minimum}). 
The general case follows from investigating (using standard calculus techniques) the monotonicity of function $f(x)=(x+\epsilon)^{\lambda}-x^{\lambda}$ for $x>0.$

We will highlight the concepts we introduce in this paper on the class of  \emph{induced subgraph
games}, studied in \cite{deng1994complexity} (see also \cite{chalkiadakis2011computational,topkis1998supermodularity}).

\begin{definition} 
{[}Induced subgraph (IS) games:{]} We are given a connected loopless graph $G=(V,E)$ and
a set of integer \emph{weights} $(w_{i,j})_{(i,j)\in E}$ on the edges.
Vertices of $G$ are interpreted as \emph{players} in a cooperative
game, that may contribute to a coalition the edges they have in common with members of
their coalition. In other words, given set $S\subseteq V$,
the \emph{value of coalition} $S$ is 
\begin{equation} 
v(S)=\sum_{(i,j)\in E,i,j\in S}w_{i,j}.
\label{val-is}
\end{equation} 
\end{definition} 

In \cite{deng1994complexity} it is proved that for nonnegative weights
induced subgraph games are convex. In what follows we will assume that weights are nonnegative. In addition 
we also assume that, for any vertex $v \in V,$ the sum of weights of its adjacent edges is strictly positive.   
Otherwise we could simply eliminate $v$ from $G$ without changing the problem.

\begin{definition}
\label{cover-mis}
Given an instance $(G,w)$ of IS, a \emph{cover of $(G,w)$} is a function $u:V\rightarrow{\bf \mathbb{Z}}_{+}$
such that $u(V)=v(V)$ and for all $S\subseteq V$, 
\[
u(S):\stackrel{def}{=}\sum_{i\in S}u(i)\leq v(S).
\]
\end{definition}

That is, a cover is an imputation in the core of the cooperative game $(V,v)$ having integer values. The restriction 
to integers is computationally motivated, and reasonable in any intended \textit{algorithmic} application of cooperative game theory.

To apply our result to IS games we need a problem generalizing the 
{\em minimum entropy orientation problem} from  \cite{cardinal2008minimum}:

\begin{definition}
\label{bias-orientation}Consider a connected loopless graph $G=(V,E)$
and a set of integer \emph{weights} $(w_{i,j})_{(i,j)\in E}$. 
\begin{itemize} 
\item An {\em orientation} of $G$ is a mapping $h:E\rightarrow V$ such that for any $e=(i, j)\in E$, $h(e)$ is a vertex of $e$. If $h(e)=i$ we will say that $e$ is oriented towards $i$ in $h$. 

\item an orientation $h$ of $G$ is called \emph{biased} if each edge $(i,j)\in E$
with $v(i)>v(j)$ is oriented toward $i$, where the value $v(i):=\sum_{e\sim i}w_{e}$
is the weight sum of all edges adjacent to vertex $i$. In other words, edges
are oriented towards the vertex whose weight sum is larger. 
\end{itemize} 
\end{definition}

An optimization problem associated to edge orientations is to find an orientation $h$ of $G$ that minimizes 
R\'{e}nyi entropy of the weights vector:

\[
 H_{\lambda}[h]=\frac{1}{1-\lambda}\log\left[\sum_{i\in V}\left(\frac{\sum_{e\in h^{-1}(i)}w_{e}}{\sum_{e\in E}w_{e}}\right)^\lambda\right]
\]

(and similarly for $\lambda=1$). We will denote this problem by MREWO.

One can find an approximately solution to any instance of this problem by means of the Greedy algorithm 
(formally discussed in the unweighted case in \cite{cardinal2008minimum,cardinal2008pmean}), described informally as follows:
\begin{itemize}
 \item Select a vertex $i \in V$ that maximizes value $v(i)$.
 \item Orient all edges in $G$ towards $i$.
 \item Delete $i$ and all its adjacent edges from $G$ and proceed recursively. 
\end{itemize}

\section{The (parametric) worst-case fairness of a cooperative game}

We now define the main object of interest, a parametric family of 
measures of fairness for cost allocations of a cooperative game $\Gamma=(N,v)$. They are indexed by: 
\begin{itemize} 
\item A positive real $\lambda$. 
\item A solution concept $q$, yielding vector $q(\Gamma)\in {\bf R}_{+}^{|N|}$. Intuitively vector $q(\Gamma)$ represents a baseline "standard of fairness" 
to which all other elements are held. Our measures attempt to evaluate the largest possible discrepancy between 
an imputation in the core and $q(\Gamma)$. 
\end{itemize} 

To define our measure we need the following concepts: 

\begin{definition}
Let $P=(p_i)$ and $Q=(q_i)$ be two distributions and $\lambda > 0.$ 
\begin{enumerate}
 \item The R\'enyi divergence of order $\lambda$ is defined by:

\begin{eqnarray*}
D_{\lambda}(P\parallel Q) & = & \frac{1}{\lambda-1}\log\left(\sum_{i}p_{i}^{\lambda}q_{i}^{1-\lambda}\right)
\end{eqnarray*}

\item The discrete R\'{e}nyi relative entropy of order $\lambda$ is:
\begin{eqnarray}
h_{\lambda}[P,Q] & = & \frac{1}{1-\lambda}\log\left(\sum_{i}q_{i}^{\lambda-1}p_{i}\right)+\frac{1}{\lambda}\log\left(\sum_{i}q_{i}^{\lambda}\right)\notag\\
& & -\frac{1}{\lambda(1-\lambda)}\log\left(\sum_{i}p_{i}^{\lambda}\right)
\label{eq:relativeRenyi}
\end{eqnarray}
\end{enumerate}
\end{definition} 

The following is the discrete version of the Gibbs inequality (Lemma 1 from \cite{lutwak2005crame}):

\begin{lemma}
For any $\lambda>0$ the discrete $\lambda$-R\'{e}nyi relative entropy has a nonnegative
value.\\
\begin{eqnarray*}
h_{\lambda}[P,Q] & \geq & 0.
\end{eqnarray*}
\label{positiveRenyiRelative}
\end{lemma}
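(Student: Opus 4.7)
The plan is to reduce the non-negativity of $h_\lambda[P,Q]$ to the (weighted) Hölder inequality, with the sign of $\lambda-1$ dictating which flavor of Hölder to apply.

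Write $A=\sum_i p_i q_i^{\lambda-1}$, $B=\sum_i q_i^\lambda$ and $C=\sum_i p_i^\lambda$, and combine the three terms of (\ref{eq:relativeRenyi}) over the common denominator $\lambda(1-\lambda)$:
\begin{equation*}
h_\lambda[P,Q] \;=\; \frac{\lambda\log A + (1-\lambda)\log B - \log C}{\lambda(1-\lambda)}.
\end{equation*}
Since the denominator changes sign at $\lambda=1$, showing $h_\lambda[P,Q]\ge 0$ is equivalent to showing that the numerator carries the same sign as $\lambda(1-\lambda)$. Everything hinges on the elementary factorisation
\begin{equation*}
p_i^\lambda \;=\; \bigl(p_i q_i^{\lambda-1}\bigr)^{\lambda}\bigl(q_i^{\lambda}\bigr)^{1-\lambda},
\end{equation*}
which is the content of the proof: the goal is simply to sum this identity and estimate it by Hölder.

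For $0<\lambda<1$ the exponents $\lambda$ and $1-\lambda$ are positive and sum to one, so the standard (weighted) Hölder inequality in the form $\sum a_i^r b_i^{1-r}\le(\sum a_i)^r(\sum b_i)^{1-r}$ yields $C \le A^\lambda B^{1-\lambda}$, hence the numerator is non-negative, matching the sign of $\lambda(1-\lambda)>0$. For $\lambda>1$ the exponent $1-\lambda$ is negative and the previous step is no longer directly applicable; instead, I would apply Hölder to $A=\sum_i p_i\cdot q_i^{\lambda-1}$ with the conjugate pair $\lambda$ and $\lambda/(\lambda-1)$ (both genuine Hölder exponents $>1$), obtaining $A \le C^{1/\lambda}B^{(\lambda-1)/\lambda}$, i.e.\ $A^\lambda B^{1-\lambda}\le C$. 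Now the numerator is non-positive, again matching the sign of $\lambda(1-\lambda)<0$, so the ratio is $\ge 0$.

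Finally, for $\lambda=1$ the expression must be read as a limit; taking $\lambda\to 1$ in (\ref{eq:relativeRenyi}) and using l'Hôpital (or invoking the Shannon convention (\ref{eq:ShannonEnt}) directly) collapses $h_1[P,Q]$ to the classical Kullback–Leibler divergence, whose non-negativity is Gibbs' inequality. The only subtle point is simply choosing the correct Hölder pairing on each side of $\lambda=1$; once the algebraic identity above is spotted, the rest is mechanical, and no extra hypothesis on $P,Q$ beyond their being probability distributions with positive support (where $q_i=0$ and $p_i>0$ make the relevant terms $+\infty$, preserving the inequality trivially) is needed.
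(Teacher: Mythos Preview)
Your proof is correct and follows essentially the same approach as the paper: both split into cases $\lambda>1$ and $0<\lambda<1$, applying H\"older with the conjugate pair $(\lambda,\lambda/(\lambda-1))$ to $A=\sum_i p_i q_i^{\lambda-1}$ in the first case and with the pair $(1/\lambda,1/(1-\lambda))$ via the factorisation $p_i^\lambda=(p_iq_i^{\lambda-1})^\lambda (q_i^\lambda)^{1-\lambda}$ in the second, then handling $\lambda=1$ by a limit. Your presentation via the common denominator $\lambda(1-\lambda)$ is a tidy repackaging, but the underlying inequalities are identical to the paper's.
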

For the proof see the Appendix.

Though they do not generally yield metrics, informational divergence measures have a significant history of use as indicators of "similarity" between
 two probability distributions. They are pseudometrics and have a number of other desirable properties that have made them applicable to problems such 
as learning and classification  (to give just one example we refer to \cite{ullah1996entropy}). 
This observation enables us to finally give the following definition of worst-case fairness: 
\begin{definition} 
Given cooperative game $\Gamma=(N,v)$ and real number $\lambda > 0$ the {\em $\lambda$-worst-case fairness of game $\Gamma$} with respect to solution concept $q$ is defined as 
\[
Fair_{\lambda}(\Gamma,q)=sup\{ D_{\lambda}(x||q(\Gamma)):x\in core(\Gamma)\cap {\bf Z}^{N} \}. 
\] 
\label{main-def}
\end{definition}

Definition~\ref{main-def} obviously depends on the choice of the baseline solution concept $q$. At least several special cases make sense: 

\begin{enumerate} 
\item{\em strictly egalitarian worst-case fairness: } $q(\Gamma)$ is the uniform vector $U(i)=\frac{v(N)}{|N|}$ for all $i \in N.$ 
That is, we benchmark all solutions against the uniform distribution.

 Though it could seem at first somewhat controversial from a modeling standpoint, as it requires a very strong form of equality, the study of this 
measure makes sense at least from a {\em mechanism design} point of view. Namely, in the case of convex/concave games, particularly interesting examples
 of imputations in the core arise from group-strategyproof mechanisms or, equivalently, cross-monotonic sharing schemes (see \cite{moulin1999incremental} 
and Chapter 15 of \cite{nrtvCUP07}). Requiring cross-monotonicity yields (according to \cite{immorlica2008limitations}) a ``plausible
 notion of equity''.  A natural issue raised by this scenario is the potential inequity in cost assignments induced by the use of cross-monotonic mechanisms. 
The strictly egalitarian worst-case fairness offers a pessimistic measure of this inequity. 

The study of strictly egalitarian worst-case fairness can be further justified on technical grounds: 
the computational tractability of the uniform distribution makes the derivation of algorithmic bounds on worst-case fairness easier than the general case.
 Furthermore, as we will see, we can derive weaker bounds for some alternate core concepts
 from bounds on strictly egalitarian worst-case fairness. 

\item{\em marginalist worst-case fairness}: in this case $q$ is the probability distribution obtained from the Shapley value by normalization. 
One should, of course, not expect this measure to be efficiently computable given that the Shapley index is hard to compute. 

\item{\em egalitarian worst-case fairness}: in this case $q$ corresponds to the egalitarian solution of Dutta and Ray \cite{dutta1989concept}. 
Alternately one could use the equivalent Arin and I\~{n}arra solution concept \cite{arin2001egalitarian} that is a core concept for convex games.  
We will not study this measure in the present paper.
\end{enumerate}

\begin{example} 
{\em $ $ Consider the induced subgraph game $\Gamma$ with three players presented in Figure~\ref{IS-three}(a). The total 
 payoff to be shared between players is $12=2+4+6$.  The core of $\Gamma$ is given by the system of equations in Figure~\ref{IS-three}(b). 
 
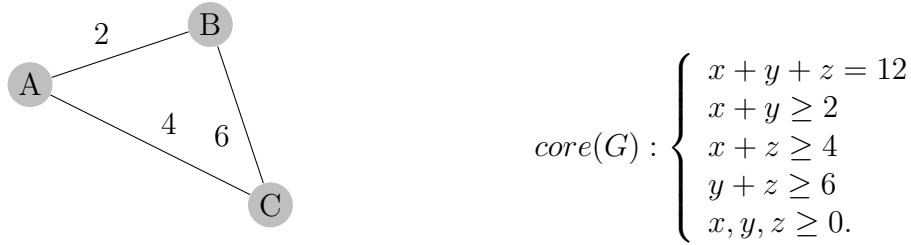
\begin{figure}
 
\begin{minipage}[t]{0.50\textwidth}
\begin{center}
\begin{tikzpicture}
[elips/.style={shape=ellipse, inner ysep=1.8cm, inner xsep=2.5cm},
link/.style={-, draw=gray!50},
vertex/.style={circle,fill=black!25,minimum size=17pt,inner sep=0pt}
]
\node[name=E,elips]{};

\matrix [
  column sep={0.8cm,between origins},
  row sep={0.8cm,between origins}, 
  ] at (E.mid)
{
 & & &  \node[vertex](n2) {B}; & & \\
 \node[vertex] (n1) {A}; & & & & &  \\
 & & & & & \\
 & & & & \node[vertex] (n3) {C}; &  \\
};

\draw [link, draw=black!100] (n1) to node[auto]{2} (n2);
\draw [link, draw=black!100] (n1) to node[auto]{4} (n3);
\draw [link, draw=black!100] (n3) to node[auto]{6} (n2);
\end{tikzpicture}
\end{center}
\end{minipage}
\hfill
\begin{minipage}[t]{0.35\textwidth}
\begin{center}
\vspace{1.2cm}
 \begin{align}
core(G):  \left\{\begin{array}{l}
x+y+z=12 \notag \\
x+y\geq 2 \notag\\
x+z\geq 4 \notag \\
y+z\geq 6 \notag\\
x,y,z\geq 0.\notag 
\end{array}
\right.
 \end{align}
\end{center}
\end{minipage}
\vspace{-0.4cm}
\caption{(a) An IS game with three players; (b) Its core.}
\label{IS-three}
\end{figure}

The Shapley value is (according to the computation in \cite{deng1994complexity}) $Sh=(3;4;5)$. 
Furthermore, by performing simple symbolic computations one can show the following: 
\begin{itemize}
\item There are a total of 57 integer imputations (covers) in the core. Of these six are extremal (i.e. corners of the polygon geometrically 
describing the core). 
\item There are two covers  in the core maximizing strictly
egalitarian worst-case fairness (for $\lambda =1$). They are $(2;0;10)$ and $(0;2;10)$ respectively. Worst-case fairness $Fair_1(\Gamma, U)$ is approximately 0.934.
\item On the other hand there exists an unique optimal cover in the case of marginalist worst-case fairness : $X=(4;8;0)$. The corresponding worst-case fairness 
$Fair_{1}(\Gamma, Sh)$ is approximately 0.805. 
\end{itemize} 

Thus employing the Shapley value as our standard of fairness of general cost allocations in the core is, at least in this game, more fair than imposing perfect equality.
}
\label{example-1}
\end{example}

\subsection{Further comments} 

We have limited our choices to {\em imputations with integer values}. Though certainly non-standard (and potentially controversial from a ``pure'' game-theoretic 
perspective), we feel that such a restriction is warranted in any {\em algorithmic} result on computational game theory. Indeed, ``real-life'' monetary payments are
 not continuous but discrete. We believe that it is not unnatural to impose such a restriction on imputations in a paper dealing with algorithmic aspects. 

The problem of computing the strictly egalitarian worst-case fairness has an especially tractable interpretation. 
In this case maximizing R\'{e}nyi divergence is equivalent to the problem of minimizing R\'{e}nyi entropy. A more 
limited connection between divergence and entropy holds even in the general case. Given distribution $R=(r_i),$ denote $r_{max} = max\{r_j:j\in supp(R)\},$
 $r_{min} = min\{r_j:{j\in supp(R)}\}$ and define
 \begin{equation} 
 nu(R)=\log\left(\frac{r_{max}}{r_{min}}\right),
 \end{equation} 
 {\em the nonuniformity} of distribution $R.$ 

\begin{lemma}
 Let $P,Q,R$ be probability distributions and $\lambda >0$. We have:
 \begin{equation*} 
 H_{\lambda}(Q)-H_{\lambda}(P)- nu(R)\leq D_{\lambda}(P||R)-D_{\lambda}(Q||R)\leq H_{\lambda}(Q)-H_{\lambda}(P)+nu(R)
 \end{equation*}  
\label{PQR}
\end{lemma}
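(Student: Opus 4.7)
The plan is to observe that the combination $D_\lambda(P\|R)+H_\lambda(P)$ collapses, by the logarithm subtraction rule, into a single expression depending only on $R$ and on a probability distribution derived from $P$. Concretely, introducing the tilted weights $\mu_i = p_i^\lambda / \sum_j p_j^\lambda$ (well-defined since $p_i\ge 0$ and not all zero), a short manipulation gives
\begin{equation*}
D_\lambda(P\|R)+H_\lambda(P) \;=\; \frac{1}{\lambda-1}\log\!\left(\sum_i \mu_i\, r_i^{\,1-\lambda}\right),
\end{equation*}
and the analogous identity holds for $Q$ with its own tilted weights. The right-hand side is (up to the prefactor $1/(\lambda-1)$) the logarithm of a convex combination of the values $r_i^{1-\lambda}$, so it lies between $\log r_{min}^{\,1-\lambda}$ and $\log r_{max}^{\,1-\lambda}$.

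Next I would check that after multiplying by $1/(\lambda-1)$ the resulting inequality is, for both cases $\lambda>1$ and $0<\lambda<1$, exactly
\begin{equation*}
-\log r_{max} \;\le\; D_\lambda(P\|R)+H_\lambda(P) \;\le\; -\log r_{min}.
\end{equation*}
The small subtlety is that the sign of $1/(\lambda-1)$ flips when $\lambda$ crosses $1$, but the sign of the exponent $1-\lambda$ in $r_i^{1-\lambda}$ flips simultaneously, so the two sign changes cancel and the same two-sided bound is obtained in both regimes. The boundary case $\lambda=1$ is handled separately by direct computation: one finds $D_1(P\|R)+H_1(P)=-\sum_i p_i\log r_i$, which is squeezed between $-\log r_{max}$ and $-\log r_{min}$.

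Finally I would apply the same two-sided bound to $Q$ and subtract: the quantity
\begin{equation*}
\bigl[D_\lambda(P\|R)+H_\lambda(P)\bigr]-\bigl[D_\lambda(Q\|R)+H_\lambda(Q)\bigr]
\end{equation*}
then lies in the interval $[-nu(R),\,nu(R)]$, since its worst-case spread is $(-\log r_{min})-(-\log r_{max})=nu(R)$. Rearranging yields precisely the claimed double inequality. I do not expect a real obstacle here; the only thing that requires a moment's care is the sign bookkeeping when $\lambda$ crosses $1$, which I would address by treating $\lambda>1$, $\lambda<1$, and $\lambda=1$ as three short sub-cases.
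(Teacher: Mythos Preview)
Your argument is correct and follows essentially the same route as the paper: both proofs bound the factor $r_i^{1-\lambda}$ inside the divergence sum by $r_{min}^{1-\lambda}$ and $r_{max}^{1-\lambda}$, split into the cases $\lambda>1$ and $0<\lambda<1$ to track the sign flip, and then subtract the $P$- and $Q$-estimates. The only difference is organisational: you first combine $D_\lambda(P\|R)+H_\lambda(P)$ into the single expression $\frac{1}{\lambda-1}\log\bigl(\sum_i \mu_i r_i^{1-\lambda}\bigr)$ and bound that by $[-\log r_{max},-\log r_{min}]$ (bounds independent of $P$), whereas the paper bounds $D_\lambda(P\|R)$ directly and lets the entropy term emerge during the final simplification; your packaging is slightly tidier and also treats $\lambda=1$ explicitly, but the substance is the same.
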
 
\vspace{-0.5cm}
\begin{proof}
See Appendix.  \qed
\end{proof}

Computing a cover $x$ of a cooperative game $\Gamma=(N,v)$ that witnesses the value of the 
$\lambda$-worst-case fairness $Fair_{\lambda}(\Gamma,q)$ has, when $q$ is the uniform distribution an alternative combinatorial description 
relying on concepts of submodular 
optimization \cite{grotschel1993geometric}. In this setting a convex cooperative game $\Gamma=(N,v)$ 
maps to an instance of the {\em supermodular} Set Cover problem. The more frequent problem, that
 of {\em submodular} set cover \cite{wolsey-submodular,fujita-ssc} corresponds to the {\em dual} 
of the cooperative game $\Gamma$, that is to the game $\Gamma^{*}=(N,v^{*})$, with $v^{*}(S)=v(N)-v(N\setminus S)$.
 As games $\Gamma$ and $\Gamma^{*}$ have the same core, the measure $Fair_{\lambda}$ can be computed with respect to the convex game $\Gamma^*$ instead of $\Gamma$.

Finding a cover realizing optimal value $Fair_{\lambda}(\Gamma, U)$ 
is a natural generalization of a extension (studied in  
\cite{cardinal2008pmean}) of the minimum entropy set cover problem \cite{halperin-karp-sc,tight-minentropy-setcover}.
This problem specializes to induced subgraph games as follows: 

\begin{definition} 
\textbf{{[}Minimum R\'{e}nyi Entropy IS{]} (MREIS):}\\
{[}GIVEN:{]} An instance $(G,w)$ of IS, and a real number $\lambda\in\mathbb{R}_{+}, \lambda\neq1 .$  \\ 
{[TO FIND:]} A cover $u$ of $(G,w)$ that minimizes : 
\[
H_{\lambda}(u)=\frac{1}{1-\lambda}\log\left[\sum_{i\in V}\left(\frac{u(i)}{u(V)}\right)^{\lambda}\right].
\]

\label{def-mpmis}
\end{definition}

Replacing $H_{\lambda}$ by Shannon entropy $H_1$ in the above definition we get the {\em minimum entropy 
induced subset} problem. 

Another version of MREIS involves \textbf{fractional covers.}
We no longer insist on the integrality condition in Definition~\ref{cover-mis}.
 Instead we allow imputations to be real-valued. Clearly, this results in a larger solution set.

\section{Results}

Many results related to submodular optimization analyze the performance
of the GREEDY algorithm. For \emph{supermodular} (convex) games, this
doesn't quite make sense: assigning the first element $i$ its payoff
$v(\{i\})$ does not take into account the fact that the contribution
of player $i$ increases with the coalition, being largest for the
coalition $N\setminus\{i\}$. In other words $v(\{i\})\leq v(N)-v(N\setminus\{i\})$, 
 and, to create an imbalanced allocation we should assign player $i$ its utopia payoff
(that is the right-hand quantity, rather than the left-hand). This leads
to considering the ReverseGreedy algorithm displayed below. 

\begin{figure}[!h]
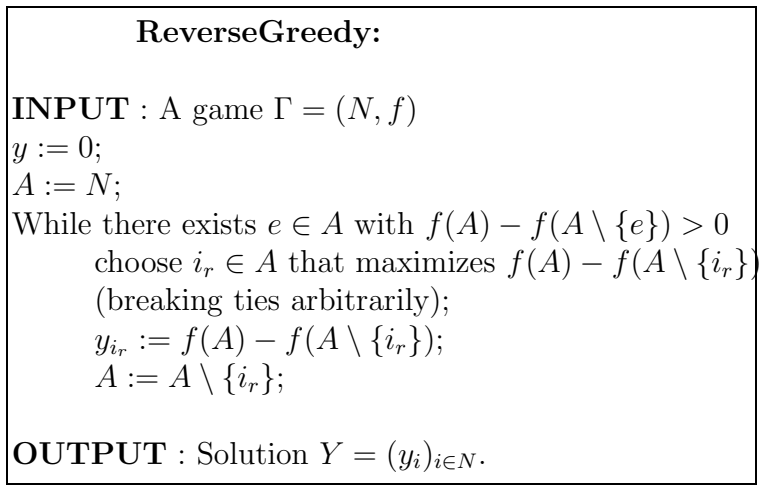

\begin{center}
\fbox{
\begin{minipage}[c]{0.7\columnwidth}
\begin{center}
\dahntab{
\=\ \ \ \ \=\ \ \ \ \=\ \ \ \ \= 
{\bf ReverseGreedy: }
\\
\\
{\bf INPUT} : A game $\Gamma=(N,f)$ \\
 $y:=0$; \\
 $A:=N$; \\
While {there exists $e\in A$ with $f(A)-f(A\setminus\{e\})>0$} \\
\> \> \> choose $i_{r}\in A$ that maximizes  $f(A)-f(A\setminus\{i_r\})$\\
\> \> \> (breaking ties arbitrarily); \\ 
\> \> \> $y_{i_r}:=f(A)-f(A\setminus\{i_r\})$; \\
\> \> \> $A:=A\setminus\{i_r\}$; \\
\\
{\bf OUTPUT} : Solution $Y=(y_{i})_{i\in N}$.
}
\end{center}
\end{minipage}}
\end{center}
\caption{\label{alg:The-Reverse-Greedy-1}The ReverseGreedy algorithm}
\end{figure}

\begin{example} $ $ {\em Consider the setting of Example~\ref{example-1}. The ReverseGreedy algorithm computes one of the covers $(0;2;10)$ or $(2;0;10)$
(optimal in the case of egalitarian worst-case fairness). 
The computed cover depends on the tie-breaking rule between the first two nodes. 
Indeed, the algorithm first selects node $C$, allocating its utopia value $4+6=10$. Then it selects one of $A$ and $B$ in an arbitrary order.
}
\label{example-2} 
\end{example}

The ReverseGreedy algorithm has an especially attractive interpretation
via \emph{game duality}: ReverseGreeedy on game $\Gamma$ simply corresponds by duality to 
the Greedy algorithm applied on the dual game $\Gamma^{*}.$

\subsection{Main result} 

In this section we give our main result, a bound on the performance of the ReverseGreedy algorithm in approximating the strictly egalitarian worst-case fairness. That is we consider the case when $q=U$, the uniform distribution. 

To do so we have to introduce some notation. 
Specifically: 
\begin{enumerate} 
\item We will denote by $l$ the number of iterations of the ReverseGreedy algorithm. 
\item For $1\leq r\leq l$ denote by $i_{r}$ the element chosen at stage $r$ of the algorithm. 
Let $W_{r}=\{i_{1},\ldots, i_{r}\}$, 
$A_{r}=U\setminus W_{r}$ and $\Delta_{r}$ be the value of element $y_{i_r}$ set at stage $r$. 
\end{enumerate}

We now introduce a quantity, the "impact of $j$ on $i_{r}$", that will play a fundamental role in our results below: 

\begin{definition}
For any $1 \leq r \leq l$ we define the impact of $j$ on $i_{r}$ by
\vspace{-0.2cm}
\begin{equation}
a_{r}^j = \left[f(A_{r-1})-f(A_{r})\right]- \left[f(A_{r-1} \setminus \{j\}) - f(A_{r} \setminus \{j\}) \right]. 
\label{a-r-j}
\end{equation}

\end{definition}

\begin{proposition}
  For any $1 \leq r \leq l$ and $1 \leq j \leq m$ we have $a_r^j \geq 0.$
\end{proposition}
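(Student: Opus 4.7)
The plan is to rewrite $a_r^j$ as a difference of two marginal contributions of the element $i_r$ selected at stage $r$, and then to invoke the supermodularity (convexity) of the game $\Gamma$. Recall that by construction $A_r = A_{r-1}\setminus\{i_r\}$, so
\[
f(A_{r-1}) - f(A_r) = f(A_{r-1}) - f(A_{r-1}\setminus\{i_r\}),
\]
which is the marginal contribution of $i_r$ to the set $A_{r-1}$. Similarly, when $j\ne i_r$, the second bracket in the definition of $a_r^j$ becomes $f(A_{r-1}\setminus\{j\}) - f((A_{r-1}\setminus\{j\})\setminus\{i_r\})$, the marginal contribution of $i_r$ to the (smaller) set $A_{r-1}\setminus\{j\}$. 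Thus $a_r^j$ measures how much the marginal contribution of $i_r$ grows when we enlarge $A_{r-1}\setminus\{j\}$ by adding $j$ back.

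I would split into three cases according to the position of $j$:
\begin{enumerate}
\item If $j\notin A_{r-1}$, then $A_{r-1}\setminus\{j\}=A_{r-1}$ and $A_r\setminus\{j\}=A_r$, so both brackets coincide and $a_r^j = 0$.
\item If $j=i_r$, then $A_r\setminus\{j\}=A_r$ and $A_{r-1}\setminus\{j\}=A_r$, so the second bracket is $0$ and $a_r^j = f(A_{r-1})-f(A_r) = \Delta_r$, which is strictly positive because of the guard condition in the while-loop of ReverseGreedy.
\item If $j\in A_{r-1}$ and $j\ne i_r$, then by the rewriting above, $a_r^j$ is the excess of the marginal contribution of $i_r$ to $A_{r-1}$ over its marginal contribution to $A_{r-1}\setminus\{j\}$.
\end{enumerate}

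The last case is where I would invoke convexity: for a supermodular set function $f$, if $S\subseteq T$ and $x\in S$, then $f(T)-f(T\setminus\{x\}) \geq f(S)-f(S\setminus\{x\})$ (marginal contributions are non-decreasing in the set). Taking $T = A_{r-1}$, $S = A_{r-1}\setminus\{j\}$, and $x=i_r$ (which lies in $S$ since $i_r\ne j$ and $i_r\in A_{r-1}$), this inequality is exactly $a_r^j\ge 0$. I do not anticipate any real obstacle: the only delicate point is remembering to dispatch the degenerate cases $j\notin A_{r-1}$ and $j = i_r$ separately, since in those cases the ``$S\subseteq T$ with $x\in S$'' template of supermodularity does not apply directly but the conclusion holds trivially.
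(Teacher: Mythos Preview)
Your proof is correct and follows essentially the same approach as the paper: both dispatch the degenerate cases $j=i_r$ and $j\notin A_{r-1}$ directly and then invoke supermodularity for the remaining case $i_r\neq j\in A_{r-1}$. The only cosmetic difference is that you phrase supermodularity as ``increasing marginal contributions'' (with $S=A_{r-1}\setminus\{j\}\subseteq T=A_{r-1}$ and $x=i_r$), whereas the paper uses the equivalent lattice inequality $f(S)+f(T)\le f(S\cup T)+f(S\cap T)$ with $S=A_r$ and $T=A_{r-1}\setminus\{j\}$.
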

\begin{proof}

Note that $A_{r-1}= A_{r}\cup \{i_{r}\}$. Thus when  $j=i_{r}$ or $i_{r}\neq j\not \in A_{r-1}$ the second term is zero, and the result follows directly from the monotonicity of function $f$.
Assume now that $i_{r}\neq j \in A_{r-1},$ thus $j\in A_{r}$. Define $S = A_{r}$ and $T = A_{r-1} \setminus \{j\}.$ Then $S\cup T=A_{r-1}$, $S\cap T=A_{r}\setminus \{j\}$, 
and we employ the supermodularity of function $f$. \qed\end{proof}

Given an optimal solution $X=(X_{j})$, we will break it down into a large number of components $Z_{r}^{j}\in {\bf Z}$ as presented in equations~(\ref{first}) and~(\ref{second}) below:
\vspace{-0.5cm}
\begin{align}
 X_{j}= \sum_{r=1}^l Z_{r}^j, \forall j \in [m]\label{first}\\
 0 \leq Z_r^j \leq a_r^j.\label{second} 
\end{align}

Intuitively $Z_{r}^j$ is the part of the optimal solution $X_{j}$ that can be 
assigned to cover ``set $i_r$''. This explains the newly introduced constants: first, one cannot allocate more than 
the total of $X_{j}$. Second, one cannot allocate to any ``set $i_{r}$'' more than ``its intersection with $X_{j}$''.

\begin{definition} \label{Z_alpha_delta}
Given concave game $\Gamma$
\begin{itemize} 
\item  
 Let $\alpha= \alpha(\Gamma)$ the {\em smallest} positive value such that for some cover $X$ in the core minimizing $Fair_{\lambda}(\Gamma)$,  
one can define quantities $Z_{r}^{j}$ so that inequalities 
\begin{equation}
 \sum_{j=1}^{m} Z_{r}^{j} \leq \alpha \cdot \Delta_{r} 
\label{defalpha}
\end{equation}
hold for any $1\leq r\leq l.$
\item
Let $\beta= \beta(\Gamma)$ the {\em largest} positive value such that for some cover $X$ in the core minimizing $Fair_{\lambda}(\Gamma)$,  
one can define quantities $Z_{r}^{j}$ so that inequalities 
\begin{equation}
 \beta \cdot \Delta_{r} \leq \sum_{j=1}^{m} Z_{r}^{j} 
\label{defbeta}
\end{equation}
hold for any $1\leq r\leq l.$ 
\end{itemize} 
\end{definition}

\begin{proposition} 
For any convex game $\Gamma$ we have 
\[
\beta(\Gamma)\leq 1 \leq \alpha(\Gamma).
\]
\label{ab}
\end{proposition}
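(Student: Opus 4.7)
The plan is to prove the proposition by a straightforward double-counting argument: both the core allocation $X$ and the ReverseGreedy partial allocations $\Delta_1,\dots,\Delta_l$ distribute the same total mass $v(N)$, and averaging then pins $\alpha$ above $1$ and $\beta$ below $1$.

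First I would establish the identity $\sum_{r=1}^l \Delta_r = v(N)$. Telescoping gives directly
\[
\sum_{r=1}^l \Delta_r \;=\; f(A_0) - f(A_l) \;=\; v(N) - v(A_l),
\]
so the substantive point is $v(A_l) = 0$. This is where convexity of $\Gamma$ enters: convex games with $v(\emptyset)=0$ are monotone, so marginal contributions $f(S\cup\{i\}) - f(S)$ are nondecreasing in $S$. The ReverseGreedy loop halts only once every $i \in A_l$ satisfies $f(A_l) - f(A_l\setminus\{i\}) = 0$; by monotonicity of marginals this forces $f(S\cup\{i\}) - f(S) = 0$ for every $S \subseteq A_l$ and $i \in A_l\setminus S$, and a short downward induction on $|A_l \setminus S|$ then gives $v(A_l) = v(\emptyset) = 0$, as required.

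Next I would combine this with the fact that $X$ is a core imputation, which forces $\sum_j X_j = v(N)$, together with the decomposition (\ref{first}). Summing $X_j = \sum_r Z_r^j$ over $j$ and interchanging the order of summation yields the key identity
\[
\sum_{r=1}^l \sum_{j=1}^m Z_r^j \;=\; \sum_{j=1}^m X_j \;=\; v(N) \;=\; \sum_{r=1}^l \Delta_r.
\]
The proposition then falls out by averaging. If $\sum_j Z_r^j \leq \alpha \Delta_r$ for every $r$, summing over $r$ gives $v(N) \leq \alpha \cdot v(N)$, whence $\alpha \geq 1$ (for any nontrivial game, $v(N) > 0$). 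Dually, summing $\beta \Delta_r \leq \sum_j Z_r^j$ over $r$ yields $\beta \cdot v(N) \leq v(N)$, so $\beta \leq 1$. Since these inequalities hold for every admissible pair $(X,Z)$, they transfer to the infimum and supremum defining $\alpha(\Gamma)$ and $\beta(\Gamma)$.

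The single delicate point in this plan is the termination identity $v(A_l) = 0$; once that is in hand, the rest is elementary double counting and I would expect the entire argument to occupy only a few lines.
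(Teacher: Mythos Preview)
Your proof is correct and follows essentially the same double-counting argument as the paper: sum the defining inequalities~(\ref{defalpha}) and~(\ref{defbeta}) over $r$, use $\sum_r \Delta_r = f(N)$ together with $\sum_j X_j = f(N)$ from the decomposition~(\ref{first}), and conclude. You supply more justification for the identity $\sum_r \Delta_r = f(N)$ (via $v(A_l)=0$) than the paper, which simply asserts it ``by the ReverseGreedy algorithm''; one small caveat is that convexity plus $v(\emptyset)=0$ alone does not force monotonicity---you also need $v(\{i\})\geq 0$, which is implicit in the paper's setting.
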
 
\vspace{-0.5cm}
\begin{proof} 
We prove the first inequality, the second is proved in an entirely similar manner. 

Sum all equations~(\ref{defbeta}) for all $r=1,\ldots, l$:

 The left-hand side is 
\begin{equation*} 
 \beta(\Gamma) \cdot \sum_{r=1}^{l} \Delta_{r} = \beta(\Gamma)\cdot f(N),
\end{equation*}

by the ReverseGreedy algorithm. 

Similarly, the right-hand side is 

\begin{equation*} 
 \sum_{r=1}^{l} \left(\sum_{j=1}^{m} Z_{r}^{j}\right) = \sum_{j=1}^{m} \left(\sum_{r=1}^{l} Z_{r}^{j} \right) = \sum_{j=1}^{m} X_{j}= f(N).
\end{equation*}

The result follows. \qed
\end{proof}

Our main result gives an upper bound applicable to  all convex cooperative games: 

\begin{theorem}
Given a convex cooperative game $\Gamma$ the ReverseGreedy algorithm produces a cover $RG$ satisfying 
\begin{itemize} 
\item For 
 any $0<\lambda < 1$
\begin{equation} 
H_{\lambda}(RG)\leq H_{\lambda}(OPT)+\frac{1}{\lambda-1}\log_{2}(\beta\lambda).
\end{equation} 
\item For 
 any $\lambda> 1$
\begin{equation} 
H_{\lambda}(RG)\leq H_{\lambda}(OPT)+\frac{1}{\lambda-1}\log_{2}(\alpha\lambda).
\end{equation} 
\end{itemize} 
\label{RenyiRG}
\end{theorem}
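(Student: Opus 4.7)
The plan is to reduce the theorem to a power-sum inequality on $(\Delta_r)$ versus $(X_j)$ and then establish that inequality by exploiting the decomposition $X_j = \sum_r Z_r^j$ from Definition~\ref{Z_alpha_delta} combined with convexity (or concavity) of $x \mapsto x^\lambda$. Because both the ReverseGreedy output and any core cover $X$ sum to $f(N)$, the normalizations inside $H_\lambda$ cancel and
\[
H_\lambda(RG) - H_\lambda(OPT) \;=\; \frac{1}{1-\lambda}\log_2\frac{\sum_{r=1}^l \Delta_r^\lambda}{\sum_{j=1}^m X_j^\lambda}.
\]
Unwinding the sign of $1-\lambda$, the two claims become $\sum_j X_j^\lambda \leq \alpha\lambda \sum_r \Delta_r^\lambda$ for $\lambda > 1$ and $\sum_j X_j^\lambda \geq \beta\lambda \sum_r \Delta_r^\lambda$ for $0 < \lambda < 1$.

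For $\lambda > 1$, I would fix a decomposition $(Z_r^j)$ witnessing the value $\alpha$ and use two facts directly: the pointwise bound $Z_r^j \leq a_r^j \leq \Delta_r$ (the second inequality follows from the definition of $a_r^j$ together with the monotonicity of $f$) and the aggregate bound $\sum_j Z_r^j \leq \alpha \Delta_r$ from~(\ref{defalpha}). I would then write $X_j^\lambda$ as a telescoping sum $\sum_r [(Y_j^{(r)})^\lambda - (Y_j^{(r-1)})^\lambda]$ with $Y_j^{(r)} = \sum_{s \leq r} Z_s^j$, apply the convex tangent inequality $(x+h)^\lambda - x^\lambda \leq \lambda(x+h)^{\lambda-1} h$ to each increment, and swap the order of summation. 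The factor $\lambda$ in the final bound arises naturally from the derivative of $x^\lambda$, and is precisely the one that, via L'H\^opital at $\lambda \to 1$, recovers the classical $\log_2 e$ additive bound of Halperin--Karp for minimum-entropy set cover.

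The case $0 < \lambda < 1$ is mirror-symmetric: fix a decomposition witnessing $\beta$ and apply the concave analogue $(x+h)^\lambda - x^\lambda \geq \lambda(x+h)^{\lambda-1} h$ (valid because $t^{\lambda-1}$ is decreasing on $(0,\infty)$ so its minimum on $[x,x+h]$ sits at the right endpoint). Summing the resulting pointwise lower bound over the telescope and inserting $\sum_j Z_r^j \geq \beta \Delta_r$ from~(\ref{defbeta}) yields $\sum_j X_j^\lambda \geq \beta\lambda \sum_r \Delta_r^\lambda$.

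The main obstacle I expect is the closing step of the telescoping estimate: the inner factor $(Y_j^{(r)})^{\lambda-1}$ must be tamed by $\Delta_r^{\lambda-1}$ in order to combine with the aggregate bound and produce the clean factor $\alpha \lambda$ (respectively $\beta \lambda$). The pointwise bound $Z_r^j \leq \Delta_r$ is not cumulative: the partial sums $Y_j^{(r)}$ grow with $r$ while the greedy increments $\Delta_r$ decrease, so naive monotonicity fails. Resolving this will probably require either reordering the telescope so that the heaviest contributions $Z_r^j$ are accounted for last, or using the freedom in choosing a decomposition witnessing $\alpha$ (respectively $\beta$) to make the cumulative structure compatible with the greedy stages. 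The same difficulty reappears, with reversed inequalities, in the $0 < \lambda < 1$ case.
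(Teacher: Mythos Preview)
Your reduction to the power-sum inequality $\sum_j X_j^\lambda \leq \alpha\lambda \sum_r \Delta_r^\lambda$ (and its $\beta$-analogue for $0<\lambda<1$) is exactly right and matches the paper, as does the telescoping-plus-convexity strategy. The obstacle you flag at the end is real, however, and neither of your suggested fixes---reordering the telescope or exploiting residual freedom in the decomposition $(Z_r^j)$---is how the paper closes it.

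The missing ingredient is a structural inequality that comes not from the decomposition but from the \emph{greedy choice itself} combined with \emph{core membership of $X$}. At stage $r$ the algorithm picks $i_r$ maximizing the marginal drop, so for every $j$ one has $\Delta_r \geq f(A_{r-1})-f(A_{r-1}\setminus\{j\})$. Telescoping both $f(N)-f(A_{r-1})$ and $f(N\setminus\{j\})-f(A_{r-1}\setminus\{j\})$ over stages $1,\ldots,r-1$, and then using the core inequality $X_j\leq f(N)-f(N\setminus\{j\})$, yields
\[
\Delta_r \;\geq\; X_j-\sum_{k=1}^{r-1} a_k^j,
\]
which is inequality~(\ref{DeltaRAkj}) in the paper. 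This is the crux: the $r$-th greedy increment dominates the residual of $X_j$ after the cumulative impacts through stage $r-1$ have been removed. You never invoke either the optimality of the greedy step or the core constraint on $X$, and without them there is no reason the increments $\Delta_r$ should control the partial sums of any particular $X_j$.

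To use~(\ref{DeltaRAkj}) one telescopes \emph{downward}: set $U_j^{r}=X_j-\sum_{k\leq r}Z_k^j$, so $U_j^0=X_j$, $U_j^l=0$, and $U_j^{r-1}-U_j^r=Z_r^j$. The convexity estimate becomes $(U_j^{r-1})^\lambda-(U_j^r)^\lambda\leq\lambda(U_j^{r-1})^{\lambda-1}Z_r^j$, and~(\ref{DeltaRAkj}) lets one replace $(U_j^{r-1})^{\lambda-1}$ by $\Delta_r^{\lambda-1}$. Summing over $r$ and $j$ and invoking $\sum_j Z_r^j\leq\alpha\Delta_r$ gives the $\lambda>1$ case; the case $0<\lambda<1$ is symmetric with $\beta$. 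Your upward partial sums $Y_j^{(r)}$ run the wrong direction to interface with~(\ref{DeltaRAkj}), which is why you hit the wall you describe.
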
 

\begin{corollary} 
The cover RG produced by the ReverseGreedy algorithm satisfies 

\begin{equation}
D_{\lambda}(RG\parallel U) \geq  Fair_{\lambda}(\Gamma,U)-\delta_{\lambda}(\Gamma)
\end{equation}

where $\delta_{\lambda}(\Gamma)=\begin{cases}
\frac{1}{\lambda-1}\log_{2}(\beta\lambda), & \mbox{if }0<\lambda<1\\
\frac{1}{\lambda-1}\log_{2}(\alpha\lambda), & \mbox{if }\lambda>1
\end{cases}$
\label{corollaryRG}
\end{corollary}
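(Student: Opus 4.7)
The plan is to deduce this directly from Theorem~\ref{RenyiRG} by exploiting the fact that when the baseline solution concept is the uniform distribution $U$ on $N$, R\'enyi divergence against $U$ differs from R\'enyi entropy only by an additive constant that does not depend on the imputation. Consequently, maximizing $D_\lambda(\cdot \| U)$ over the integer core is the same optimization problem as minimizing $H_\lambda(\cdot)$ over the integer core, which is what Theorem~\ref{RenyiRG} controls.

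Concretely, let $n=|N|$ and $U_i=1/n$ for all $i \in N$. The first step is to unfold the definition of $D_\lambda$:
\begin{equation*}
D_{\lambda}(P \| U) \;=\; \frac{1}{\lambda-1}\log\!\left(\sum_i p_i^{\lambda} n^{\lambda-1}\right) \;=\; \log_2 n \;-\; H_{\lambda}(P),
\end{equation*}
valid for every probability distribution $P$ on $N$ and for every $\lambda>0$, $\lambda\neq 1$ (the case $\lambda=1$ being a well-known identity for Shannon entropy). In particular, letting $OPT$ denote any integer cover in $core(\Gamma)$ that witnesses the supremum defining $Fair_{\lambda}(\Gamma,U)$,
\begin{equation*}
Fair_{\lambda}(\Gamma,U) \;=\; D_{\lambda}(OPT \| U) \;=\; \log_2 n \;-\; H_{\lambda}(OPT).
\end{equation*}

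The second step is to invoke Theorem~\ref{RenyiRG} to write $H_{\lambda}(RG) \leq H_{\lambda}(OPT) + \delta_{\lambda}(\Gamma)$, then apply the divergence-entropy identity once more to $P=RG$:
\begin{equation*}
D_{\lambda}(RG \| U) \;=\; \log_2 n - H_{\lambda}(RG) \;\geq\; \log_2 n - H_{\lambda}(OPT) - \delta_{\lambda}(\Gamma) \;=\; Fair_{\lambda}(\Gamma,U) - \delta_{\lambda}(\Gamma),
\end{equation*}
which is the claimed inequality. One small bookkeeping point worth checking is that the cover $OPT$ used in the definition of $\alpha(\Gamma)$ and $\beta(\Gamma)$ (Definition~\ref{Z_alpha_delta}) is exactly the same $OPT$ that realizes the supremum in $Fair_{\lambda}(\Gamma,U)$; this is consistent because both are defined as covers minimizing $H_\lambda$ (equivalently, maximizing $D_\lambda(\cdot \| U)$), so no ambiguity arises.

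Essentially the only mildly delicate point is ensuring that $RG$ and $OPT$ are both genuine probability distributions after normalization so that the formula $D_\lambda(P \| U) = \log_2 n - H_\lambda(P)$ applies verbatim; this is immediate since both are covers with the same total mass $v(N)$, so they share the same normalization. Beyond that, the argument is a one-line rewriting, which is why the statement appears as a corollary rather than a theorem; I do not expect any substantive obstacle.
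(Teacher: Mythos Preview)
Your proof is correct and essentially matches the paper's approach. The paper derives the corollary from Theorem~\ref{RenyiRG} together with Lemma~\ref{PQR}; since $nu(U)=0$ for the uniform distribution, Lemma~\ref{PQR} collapses to the exact identity $D_{\lambda}(P\|U)-D_{\lambda}(Q\|U)=H_{\lambda}(Q)-H_{\lambda}(P)$, which is precisely the relation $D_{\lambda}(P\|U)=\log_2 n - H_{\lambda}(P)$ you compute directly.
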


\begin{proof} Follows directly from Theorem \ref{RenyiRG} and Lemma \ref{PQR}.\qed\end{proof}

\begin{observation} 
By Proposition~\ref{ab} both constants in the upper bounds of Theorem~\ref{RenyiRG} are nonnegative. 
\end{observation}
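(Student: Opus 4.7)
The plan is to verify nonnegativity in each of the two regimes of $\lambda$ separately, directly invoking the bounds $\beta(\Gamma)\leq 1\leq \alpha(\Gamma)$ furnished by Proposition~\ref{ab}. In each case the constant to be bounded is a product of two factors, the prefactor $\frac{1}{\lambda-1}$ and a logarithm $\log_2(c\lambda)$ (with $c=\beta$ or $c=\alpha$), and I would pin down the sign of each factor and then conclude that the two signs agree.

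For the regime $0<\lambda<1$, I would first note that $\frac{1}{\lambda-1}<0$, so the constant $\frac{1}{\lambda-1}\log_2(\beta\lambda)$ is nonnegative precisely when $\log_2(\beta\lambda)\leq 0$, i.e. when $\beta\lambda\leq 1$. Combining $\beta\leq 1$ from Proposition~\ref{ab} with $\lambda<1$ gives $\beta\lambda<1$, yielding the desired inequality. Symmetrically, for the regime $\lambda>1$, $\frac{1}{\lambda-1}>0$, so the constant $\frac{1}{\lambda-1}\log_2(\alpha\lambda)$ is nonnegative precisely when $\alpha\lambda\geq 1$; combining $\alpha\geq 1$ with $\lambda>1$ gives $\alpha\lambda>1$, finishing the argument.

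Before writing this out, I would make one bookkeeping check: that the logarithms are well-defined, i.e.\ that $\alpha,\beta>0$. This is immediate from Definition~\ref{Z_alpha_delta}, which explicitly declares $\alpha$ and $\beta$ to be the smallest (respectively largest) \emph{positive} values for which the stated inequalities admit a decomposition. Granting this, there is no real obstacle: the observation reduces to a two-line case split on the sign of $\lambda-1$, and the ``hard part'', if any, is simply noticing that in both regimes the sign of the prefactor and the sign of the logarithm cancel in the right direction, which is exactly what Proposition~\ref{ab} was set up to guarantee.
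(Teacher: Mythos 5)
Your argument is correct and is exactly the reasoning the paper intends: the Observation is stated without a written proof precisely because it reduces to the sign analysis you give, combining $\beta(\Gamma)\leq 1\leq\alpha(\Gamma)$ from Proposition~\ref{ab} with the sign of $\frac{1}{\lambda-1}$ in each regime. Your extra check that $\alpha,\beta>0$ (so the logarithms are defined) is a sensible bookkeeping step and consistent with Definition~\ref{Z_alpha_delta}.
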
 

\begin{observation}
 If at least one of parameters $\alpha$ or $\beta$ are equal to $1$ then we can complete the result to the case $\lambda=1$ by taking the limit $\lambda\rightarrow 1$, yielding
\begin{equation} 
H(RG)\leq H(OPT)+\log_{2}(e).
\end{equation} 

\end{observation}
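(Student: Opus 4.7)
The plan is to derive the $\lambda = 1$ result as a limiting case of Theorem~\ref{RenyiRG}, exploiting the continuity of Rényi entropy in its order parameter. First I would invoke the well-known fact from information theory that $\lim_{\lambda\to 1}H_\lambda(P) = H(P)$ for any finitely supported probability distribution $P$; since both $RG$ and $OPT$ are (normalized) integer vectors with bounded support, this passage to the limit is routine and can be justified by L'Hôpital's rule applied to the defining expression $\frac{1}{1-\lambda}\log_2\!\left(\sum_i p_i^\lambda\right)$.

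Next I would take limits on both sides of the inequality in Theorem~\ref{RenyiRG}. The left-hand side $H_\lambda(RG)$ tends to $H(RG)$ and the right-hand side $H_\lambda(OPT)$ tends to $H(OPT)$ by the previous remark. It remains to evaluate the limit of the additive error term. In the case $\alpha = 1$, one would use the one-sided limit $\lambda\to 1^+$ in the bound $\frac{1}{\lambda-1}\log_2(\alpha\lambda) = \frac{1}{\lambda-1}\log_2(\lambda)$. Applying the Taylor expansion $\log_2(\lambda) = \frac{\lambda-1}{\ln 2} + O((\lambda-1)^2)$ near $\lambda=1$ yields
\begin{equation*}
\lim_{\lambda\to 1^+}\frac{\log_2(\lambda)}{\lambda-1} = \frac{1}{\ln 2} = \log_2(e).
\end{equation*}
The case $\beta = 1$ is handled symmetrically using the one-sided limit $\lambda\to 1^-$, which yields the same value $\log_2(e)$.

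Combining these steps gives the desired inequality $H(RG) \leq H(OPT) + \log_2(e)$. The one mildly delicate point is justifying that one can actually interchange the limit with the inequality; here this is immediate because the inequality $H_\lambda(RG)\leq H_\lambda(OPT) + c(\lambda)$ holds pointwise for all $\lambda$ in a deleted neighborhood of $1$ (on the appropriate side, depending on whether $\alpha=1$ or $\beta=1$), and the limits of all three terms exist and are finite. No additional assumption on the game is required beyond the hypothesis of Theorem~\ref{RenyiRG}. The argument does not rely on any specific structural property of the ReverseGreedy output, only on the continuity provided by the Rényi-to-Shannon limit and the elementary L'Hôpital computation above, so I expect no serious obstacles.
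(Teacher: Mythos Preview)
Your proposal is correct and follows exactly the approach the paper intends: the observation is stated without proof, merely as the result of ``taking the limit $\lambda\to 1$'', and you have correctly supplied the routine details (continuity of $H_\lambda$ in $\lambda$ and the elementary computation $\lim_{\lambda\to 1}\frac{\log_2 \lambda}{\lambda-1}=\log_2 e$, using the appropriate one-sided limit depending on whether $\alpha=1$ or $\beta=1$).
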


\subsection{Proof of Theorem~\ref{RenyiRG}}
\begin{proof}

Let $OPT=(X_{j})_{j\in[m]}$ and $RG=(y_{i})_{i\in[n]}$.

For $1\leq r\leq l$ we will use the shorthand $U_{j}^{r}= X_{j}- \sum_{k=1}^{r} Z_{k}^{j}$.
We also define $U_{j}^{0}=X_{j}$. For any fixed $j$, sequence $(U_{j}^{r})$ is decreasing with $r$. On the other hand $U_{j}^{r-1}-U_{j}^{r}=Z_{r}^{j}$.

By the greedy choice we infer $y_{i_r}=\Delta_{r}=f(A_{r-1})-f(A_{r})$  
 for any $1\leq r\leq l,$ with $y_{i}=0$ for other values of $i.$

We have $A_{0}=N$. Thus 
\vspace{-0.2cm}
\begin{align}
& \Delta_{r}  \geq f(A_{r-1})-f(A_{r-1}\setminus \{j\}) = \notag \\
& = f(N)- [f(N)-f(A_{r-1})]-f(N\setminus\{j\}) + [f(N\setminus\{j\})-f(A_{r-1}\setminus \{j\})] \notag \\
& = f(N)-\sum_{k=1}^{r-1} [f(A_{k-1})-f(A_{k})] - f(N\setminus\{j\}) + \sum_{k=1}^{r-1} [f(A_{k-1}\setminus \{j\}) - \notag\\
& - f(A_{k}\setminus \{j\})] \geq f(N)- f(N\setminus\{j\}) - \sum_{k=1}^{r-1} a_{k}^{j} \geq X_{j}- \sum_{k=1}^{r-1} a_{k}^{j}.
\label{DeltaRAkj}
\end{align} 

At the last step we used inequality $X_{j}\leq f(N)-f(N\setminus\{j\})$, which follows from core membership (in)equalities $\sum_{k\in N\setminus \{j\}} X_{k}\geq f(N\setminus \{j\})$
 and $\sum_{k\in N} X_{k}=f(N)$.

\textbf{Case $\lambda>1$:}

First we use inequality $\sum_{j=1}^{m}Z_{r}^{j}\leq\alpha\cdot \Delta_{r}$ as follows: 
\vspace{-0.2cm}
\begin{align*}
\alpha \sum_{r=1}^{l}(\Delta_{r})^{\lambda} & =\sum_{r=1}^{l}(\alpha \Delta_{r})(\Delta_{r})^{\lambda-1}\geq\sum_{r=1}^{l}\left(\sum_{j=1}^{m}Z_{r}^{j}\right)(\Delta_{r})^{\lambda-1} \\
\end{align*}
\vspace{-0.2cm}
Applying inequality (\ref{DeltaRAkj}) we get:
\begin{align*}
 \sum_{r=1}^{l}\sum_{j=1}^{m}Z_{r}^{j}\Delta_{r}^{\lambda-1}\geq\sum_{r=1}^{l}\sum_{j=1}^{m}Z_{r}^{j}\left(X_{j}-\sum_{k=1}^{r-1}Z_{k}^{j}\right)^{\lambda-1}=
 \sum_{j=1}^{m}\left[\sum_{r=1}^{l} Z_{r}^{j} (U_{j}^{r-1})^{\lambda-1}\right]\\
\end{align*}

Using identity $Z_{r}^{j}=U_{j}^{r}-U_{j}^{r-1}$ we have: 
\begin{align*}
\sum_{j=1}^{m} \sum_{r=1}^{l} Z_{r}^{j} (U_{j}^{r-1})^{\lambda-1} = \sum_{j=1}^{m} \sum_{r=1}^{l} (U_{j}^{r}-U_{j}^{r-1}) (U_{j}^{r-1})^{\lambda-1} = (*)
\end{align*}

Transforming the difference into a sum of ones and taking into account that $x^{\lambda-1}$ is increasing we obtain:

\begin{align*}
(*) = \sum_{j=1}^{m} \left[ \sum_{r=1}^{l} \left(\sum_{k=U_{j}^{r}+1}^{U_{j}^{r-1}} 
(U_{j}^{r-1})^{\lambda-1} \right) \right] \geq \sum_{j=1}^{m} \left[ \sum_{r=1}^{l} \left( \sum_{k=U_{j}^{r}+1}^{U_{j}^{r-1}}k^{\lambda-1}\right) \right]\\
\end{align*}

From $U_j^0=X_j$ it follows that:
\begin{align*}
\sum_{j=1}^{m} \left[ \sum_{r=1}^{l} \left( \sum_{k=U_{j}^{r}+1}^{U_{j}^{r-1}}k^{\lambda-1}\right) \right] = \sum_{j=1}^{m}\left(\sum_{k=1}^{X_{j}}k^{\lambda-1}\right)\\
\end{align*}
\vspace{-0.3cm}
Putting things together, using $\lambda > 1$ and standard calculus we have: 
\begin{align*}
\alpha \sum_{r=1}^{l}(\Delta_{r})^{\lambda} \geq\sum_{j=1}^{m}\left(\sum_{k=1}^{X_{j}}k^{\lambda-1}\right) \geq \sum_{j=1}^{m}\frac{X_{j}^{\lambda}}{\lambda}=\frac{1}{\lambda}\sum_{j=1}^{m}X_{j}^{\lambda} \\
\end{align*}
\vspace{-0.4cm}
Taking the logarithm and dividing by  $1-\lambda<0$ yields:

\begin{align*}
\frac{1}{1-\lambda}\log\left(\sum_{r=1}^{l}\Delta_{r}^{\lambda}\right) \leq \frac{1}{1-\lambda}\log\left(\sum_{j=1}^{m}X_{j}^{\lambda}\right)-\frac{1}{1-\lambda}\log(\alpha\lambda)
\end{align*}

or, equivalently, by the definition of R\'{e}nyi entropy:

\begin{align*}
H_{\lambda}(RG) \leq H_{\lambda}(OPT)+\frac{1}{\lambda-1}\log(\alpha\lambda)
\end{align*}

The proof is similar in the case $0<\lambda<1$. We use instead the definition of $\beta$. Also the standard calculus inequality changes its direction. \qed
\end{proof}

\section{The worst-case fairness of Induced Subgraph Games}

In this section we particularize our general result on worst-case fairness to the case of 
 Induced Subgraph Games (problem MREIS). An easy first observation is that computing the worst-case fairness of this class of games is computationally intractable: 

\begin{theorem} For any $\lambda >0$ the following decision problems is NP-complete: 
\begin{itemize} 
\item Given an induced subgraph game $\Gamma=(G,w)$ with nonnegative weights and a constant $\eta >0$, does $Fair_{\lambda}(\Gamma,U)\geq \eta$ ? That is, is there any cover $x\in core(\Gamma)$ with $D_{\lambda}(x,U)\geq \eta$ ? 
\end{itemize} 
 \label{lower-meis} \end{theorem}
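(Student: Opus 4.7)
My plan is to establish NP-completeness in two steps: membership in NP, then a reduction from a known NP-hard problem. For membership, I take as certificate any integer cover $x \in core(\Gamma)$. Each entry is bounded by $v(V)$ and so has polynomial bit-length. Verifying $x \in core(\Gamma)$ for an IS game reduces to a single submodular-function minimization: the slack $S \mapsto v(S) - x(S)$ is supermodular in $S$, and deciding whether its maximum exceeds zero is polynomial-time solvable via submodular minimization (for IS games a direct max-flow formulation is also available). For fixed rational $\lambda$, $D_\lambda(x, U)$ can be computed to any required precision in polynomial time and compared against $\eta$.

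For NP-hardness I would reduce from MINIMUM VERTEX COVER. Given $(G=(V,E), k)$, build the IS game $\Gamma_G$ on $G$ with unit weights, possibly augmented by a simple weighted gadget. Since $U$ is uniform, $D_\lambda(x\parallel U) = \log|V| - H_\lambda(x)$, so the problem is equivalent to minimizing R\'enyi entropy over integer core imputations. The integer covers of $\Gamma_G$ are exactly the in-degree vectors of orientations of $G$: the IS-value function is a polymatroid rank function, and the extreme integer points of its base polytope are generated by greedy orientation of edges one at a time. The support of any such cover is precisely the set of heads used, which must be a vertex cover of $G$, and every vertex cover is realized in this manner. Hence the smallest support achievable by an integer cover equals the minimum vertex cover number $\tau(G)$.

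The final step is to convert ``support size $\leq k$'' into a threshold on $H_\lambda$. I would attach to every vertex an anchor gadget (for instance, a pendant weighted substructure of large uniform weight) whose role is to force the optimal cover to spread its mass nearly uniformly over its support; with gadget weights calibrated, a cover of support $s$ attains R\'enyi entropy within $\varepsilon$ of $\log s$ plus a fixed offset, yielding an explicit rational threshold $\eta$ that separates YES and NO instances of vertex cover. The main obstacle is making this calibration robust across all $\lambda > 0$: R\'enyi entropy behaves qualitatively differently for $\lambda \in (0,1)$ (sensitive to support size and small masses) and $\lambda > 1$ (dominated by the largest masses), so the gadget parameters must be chosen as functions of $\lambda$, and the verification that intra-support variation is dwarfed by the support-size gap requires a careful monotonicity argument, drawing on Lemma~\ref{dominate_min_ent}. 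An alternative, if a uniform-in-$\lambda$ reduction proves elusive, is to handle the three regimes $\lambda = 1$ (building on the hardness of minimum-entropy orientation from \cite{cardinal2008minimum}), $0 < \lambda < 1$, and $\lambda > 1$ by parallel arguments.
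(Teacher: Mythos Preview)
Your membership-in-NP argument is sound and in fact more careful than the paper's (which simply declares it ``trivial''). The gap is in your hardness reduction.

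The vertex-cover route has a real hole at exactly the place you flag as ``the main obstacle.'' You correctly observe that integer covers of a unit-weight IS game are in-degree vectors of orientations and that the minimum possible support equals $\tau(G)$. But turning ``support $\le k$'' into a R\'enyi-entropy threshold via anchor gadgets is not just a calibration detail: attaching a pendant (weighted or not) to \emph{every} vertex forces the minimum vertex cover of the augmented graph to be $|V|$ regardless of $G$ (for each $v$ either $v$ or its pendant must be chosen, so any cover has size at least $|V|$, and $V$ itself attains it). Hence the simplest instantiation of your gadget destroys the very parameter you are trying to encode. Any workable gadget must simultaneously (i) leave a quantity tied to $\tau(G)$ as the unique driver of the optimal entropy, (ii) flatten the conditional distribution over the support, and (iii) do so uniformly in $\lambda$; you have not exhibited such a gadget, and for $\lambda>1$ (where $H_\lambda$ is governed by the heaviest mass, not support size) it is far from clear one exists.

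The paper sidesteps all of this. It first rewrites $D_\lambda(x\Vert U)\ge\eta$ as $H_\lambda(x)\le\log n-\eta$, then invokes Lemma~\ref{interred} to identify MREIS with MREWO, and finally observes that the Cardinal--Fiorini--Joret reduction from $1$-in-$k$ SAT to minimum-entropy orientation already produces, on YES instances, an orientation whose in-degree distribution \emph{majorizes} every other orientation's distribution. By Lemma~\ref{dominate_min_ent}, majorization implies minimality of $H_\lambda$ for every $\lambda>0$ simultaneously, so the single existing reduction works verbatim for all $\lambda$, with only the numerical threshold recomputed. Your ``alternative'' of building on \cite{cardinal2008minimum} is in fact the whole proof; the missing insight is that Lemma~\ref{dominate_min_ent} makes the $\lambda=1$ construction $\lambda$-independent, so no new reduction or regime split is needed.
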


We believe that similar result is true for the marginalist worst-case fairness as well. However, we weren't able to prove it and leave it as an intriguing open problem (see the final section for further comments). 

Given Theorem~\ref{lower-meis} (proved in the Appendix) we need to consider 
approximation algorithms. We next give explicit upper bounds on strictly egalitarian and marginalist worst-case fairness for induced subgraph games. 
First we prove a more explicit version of Corollary \ref{corollaryRG} for such games. We then consider the target distribution $Sh$ 
obtained from the Shapley value of an induced subgraph game $\Gamma$ (see Theorem 1 in \cite{deng1994complexity}) by normalization.
In this case we compare the performance of the cover $RG$ given by the ReverseGreedy algorithm to that of a cover $BI$ corresponding to 
a biased orientation (Definition \ref{bias-orientation}).

\begin{theorem}
Given an induced subgraph game $\Gamma=(G,w)$ and $\lambda>0$
\begin{enumerate}
\item The ReverseGreedy algorithm produces a cover $RG$ satisfying 
\begin{equation} 
H_{\lambda}(Sh)-H_{\lambda}(OPT) \leq \left[H_{\lambda}(Sh) - H_{\lambda}(RG)\right]+\frac{1}{\lambda-1}\log_{2}(\lambda).
\end{equation} 
\item Consider any cover $BI$ corresponding to a biased orientation of instance $(G,w)$ of MREWO. 
It satisfies:
\begin{equation}
H_{\lambda}(Sh)-H_{\lambda}(OPT) \leq\frac{1}{\lambda}\left[H_{\lambda}(Sh) - H_{\lambda}(BI)\right]+1.
\end{equation}

\end{enumerate}
\label{upper-mRis}
\end{theorem}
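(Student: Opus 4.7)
My plan for Part 1 is to apply Theorem~\ref{RenyiRG} after proving $\alpha(\Gamma) = \beta(\Gamma) = 1$ for every IS game $\Gamma$. The key structural input I would use is the \emph{fractional orientation} representation of the core: every integer cover $X$ of an IS game admits integer values $y_e^u, y_e^v \geq 0$ for each edge $e = (u,v)$, with $y_e^u + y_e^v = w_e$, such that $X_j = \sum_{e \sim j} y_e^j$. Direct computation from (\ref{a-r-j}) first gives $a_r^{i_r} = \Delta_r$, $a_r^j = w_{(i_r,j)}$ when $j \in A_r$ is adjacent to $i_r$, and $a_r^j = 0$ otherwise. Given an optimal cover $OPT$ with such a fractional form, I would set $Z_r^{i_r} := \sum_{k \in A_r,\,(i_r,k) \in E} y_{(i_r,k)}^{i_r}$ and $Z_r^k := y_{(i_r,k)}^k$ for each $k \in A_r$ adjacent to $i_r$, with all remaining $Z_r^j$ equal to $0$. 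Three properties then hold by inspection: $\sum_r Z_r^j = X_j$, since each edge incident to $j$ is processed in exactly the stage at which the first of its endpoints is selected; $Z_r^j \leq a_r^j$, since each nonzero contribution is bounded by the underlying edge weight; and $\sum_j Z_r^j = \sum_k \bigl(y_{(i_r,k)}^{i_r} + y_{(i_r,k)}^k\bigr) = \sum_k w_{(i_r,k)} = \Delta_r$, the sums ranging over $k \in A_r$ adjacent to $i_r$. Hence both (\ref{defalpha}) and (\ref{defbeta}) hold as equalities, so $\alpha = \beta = 1$; substituting into Theorem~\ref{RenyiRG} produces the claimed bound uniformly across $\lambda \neq 1$.

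For Part 2, my plan is to bound $S_\lambda(OPT) := \sum_j OPT_j^\lambda$ in terms of $S_\lambda(Sh)$ and $S_\lambda(BI)$ and then convert back using $H_\lambda(p) = (1-\lambda)^{-1}\log S_\lambda(p)$. Two ingredients drive the argument: the utopia bound $OPT_j \leq v(j) = 2 Sh_j$, coming from the core membership $\sum_{k \neq j} OPT_k \geq v(N)-v(j)$ together with the identity $Sh_j = v(j)/2$ for IS games \cite{deng1994complexity}, and the fractional-orientation representation of $OPT$. For $\lambda > 1$, I would multiply the utopia bound to get $OPT_j^\lambda \leq 2^{\lambda-1} OPT_j\, Sh_j^{\lambda-1}$, sum over $j$, and expand $OPT_j$ edge by edge to obtain
\[
\sum_j OPT_j\, Sh_j^{\lambda-1} = \sum_{e=(u,v)} \bigl(y_e^u Sh_u^{\lambda-1} + y_e^v Sh_v^{\lambda-1}\bigr) \leq \sum_e w_e\, Sh_{\mathrm{winner}(e)}^{\lambda-1} = \sum_j Sh_j^{\lambda-1} BI_j,
\]
where the inequality uses that $x \mapsto x^{\lambda-1}$ is increasing and the biased rule sends each edge to the endpoint of larger $v$ (equivalently, larger $Sh$). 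Hölder's inequality with conjugate exponents $\lambda$ and $\lambda/(\lambda-1)$ then gives $\sum_j Sh_j^{\lambda-1} BI_j \leq S_\lambda(Sh)^{(\lambda-1)/\lambda}\, S_\lambda(BI)^{1/\lambda}$, so $S_\lambda(OPT) \leq 2^{\lambda-1} S_\lambda(Sh)^{(\lambda-1)/\lambda}\, S_\lambda(BI)^{1/\lambda}$; taking logarithms and rearranging yields exactly the claimed inequality.

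For $0 < \lambda < 1$, I would run the same argument with every relevant direction reversed: $x^{\lambda-1}$ is now decreasing, so the winning endpoint attains the \emph{minimum} of $Sh^{\lambda-1}$; the passage from $OPT_j^\lambda$ to $2^{\lambda-1} OPT_j\, Sh_j^{\lambda-1}$ flips to $\geq$ because $\lambda-1<0$; and Reverse Hölder (one conjugate exponent being negative) supplies the matching \emph{lower} bound on $S_\lambda(OPT)$. The same algebraic rearrangement then yields the same additive constant $+1$. The main obstacle will be exactly this sign-bookkeeping: ensuring that the multiple inequality flips in the $\lambda<1$ regime cooperate so that both cases collapse into a single clean bound with the constant $+1$ independent of $\lambda$.
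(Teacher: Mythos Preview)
Your proposal is correct and follows essentially the same route as the paper. For Part~1 the paper also computes $a_r^j$ explicitly (your values agree with its Lemma~\ref{coeffs}) and builds the $Z_r^j$ edge by edge to force $\sum_j Z_r^j=\Delta_r$, hence $\alpha=\beta=1$; the only cosmetic difference is that the paper views $OPT$ as an actual orientation (justified via Lemma~\ref{interred}), whereas you use the fractional split $y_e^u+y_e^v=w_e$, which is fine but requires you to note that the $y_e^u$ can be taken integral (e.g.\ by restricting to an orientation-type optimum, or by total unimodularity) so that the $Z_r^j\in\mathbf{Z}$ as Theorem~\ref{RenyiRG} needs. For Part~2 the paper carries out exactly your chain---utopia bound $OPT_j\le v(j)=2\,Sh_j$, reroute each edge to its biased winner, then H\"older---except that it packages the final H\"older step as the nonnegativity of $h_\lambda[BI,Sh]$ (Lemma~\ref{positiveRenyiRelative}); your direct H\"older/reverse-H\"older argument is the same inequality unwrapped.
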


\begin{corollary}
In the setting of the previous result we have:
\begin{enumerate}
\item For strictly egalitarian worst-case fairness:
\begin{align*}
D_{\lambda}(RG\parallel U)  \geq  Fair_{\lambda}(\Gamma,U) - \frac{1}{\lambda-1}\log_{2}(\lambda)
\end{align*}

\item For marginalist worst-case fairness:
\begin{equation}
D_{\lambda}(RG\parallel Sh)  \geq  Fair_{\lambda}(\Gamma,Sh) - \frac{1}{\lambda-1}\log_{2}(\lambda)-nu(Sh)
\label{first-guarantee}
\end{equation}
\vspace{-0.6cm}
 \begin{center}
  and
 \end{center}
\vspace{-0.1cm} 
\begin{equation}
D_{\lambda}(BI\parallel Sh)  \geq \lambda \cdot Fair_{\lambda}(\Gamma,Sh) -(1+\lambda) \cdot nu(Sh) -\lambda
\label{second-guarantee}
\end{equation}
\end{enumerate}
\label{foo}
\end{corollary}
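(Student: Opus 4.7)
The plan is to derive all three inequalities by combining the entropy bounds in Theorem~\ref{upper-mRis} with two carefully chosen applications of Lemma~\ref{PQR}. The key preliminary observation is that when $Q=R$ the Lemma collapses to a one-sided translation between divergence and entropy: specialising $(P,Q,R)\mapsto(x,S,S)$ and using $D_{\lambda}(S\|S)=0$ yields $D_{\lambda}(x\|S) \geq H_{\lambda}(S) - H_{\lambda}(x) - nu(S)$ from the right inequality, and the symmetric $D_{\lambda}(x\|S) \leq H_{\lambda}(S) - H_{\lambda}(x) + nu(S)$ from the left inequality after swapping the roles of $P$ and $Q$. When $S=U$ is uniform we have $nu(U)=0$, recovering the exact identity $D_{\lambda}(x\|U)=\log_{2}|N|-H_{\lambda}(x)$.

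For Part 1, I would use this exact identity to rewrite $Fair_{\lambda}(\Gamma,U)=\log_{2}|N|-H_{\lambda}(OPT)$ and $D_{\lambda}(RG\|U)=\log_{2}|N|-H_{\lambda}(RG)$; the claim then reduces to $H_{\lambda}(RG)-H_{\lambda}(OPT)\leq\tfrac{1}{\lambda-1}\log_{2}\lambda$, which drops out of Theorem~\ref{upper-mRis}(1) after cancelling $H_{\lambda}(Sh)$ from both sides. For the first marginalist bound~(\ref{first-guarantee}), I would apply Lemma~\ref{PQR} with the triple $(P,Q,R)=(RG,OPT,Sh)$; its left inequality yields $D_{\lambda}(RG\|Sh) \geq Fair_{\lambda}(\Gamma,Sh) + [H_{\lambda}(OPT) - H_{\lambda}(RG)] - nu(Sh)$, and then I would plug in the rearrangement $H_{\lambda}(OPT)-H_{\lambda}(RG)\geq -\tfrac{1}{\lambda-1}\log_{2}\lambda$ of Theorem~\ref{upper-mRis}(1).

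For the second marginalist bound~(\ref{second-guarantee}), I would first apply the $Q=R=Sh$ specialisation with $P=BI$ to obtain $D_{\lambda}(BI\|Sh) \geq H_{\lambda}(Sh) - H_{\lambda}(BI) - nu(Sh)$, then rearrange Theorem~\ref{upper-mRis}(2) (after multiplying through by $\lambda>0$) as $H_{\lambda}(Sh)-H_{\lambda}(BI) \geq \lambda[H_{\lambda}(Sh)-H_{\lambda}(OPT)] - \lambda$. A second invocation of the same specialisation, this time with $P=OPT$, gives $H_{\lambda}(Sh) - H_{\lambda}(OPT) \geq Fair_{\lambda}(\Gamma,Sh) - nu(Sh)$. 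Substituting the latter inside the former and collecting the contributions from $nu(Sh)$ yields $-\lambda\cdot nu(Sh)-nu(Sh)=-(1+\lambda)nu(Sh)$, producing the claimed inequality.

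The only real subtlety will be bookkeeping: picking the correct $(P,Q,R)$ ordering in each invocation of Lemma~\ref{PQR} so that every error term migrates to the favourable side of the inequality, and confirming that multiplying the entropy bound by $\lambda$ preserves direction for every $\lambda>0$ (not merely $\lambda>1$). Once the signs are pinned down, each of the three inequalities unfolds through routine algebra, and the two parts of Theorem~\ref{upper-mRis} do all of the real work.
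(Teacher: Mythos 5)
Your overall plan is the one the paper intends (its proof is literally the one line ``follows from Theorem~\ref{upper-mRis} and Lemma~\ref{PQR}''), and Part~1 of your argument is airtight: for the uniform target the exact identity $D_{\lambda}(x\parallel U)=\log_{2}|N|-H_{\lambda}(x)$ means the entropy minimizer $OPT$ of Theorem~\ref{upper-mRis} is precisely the witness of $Fair_{\lambda}(\Gamma,U)$, so the claim does reduce to $H_{\lambda}(RG)-H_{\lambda}(OPT)\leq\frac{1}{\lambda-1}\log_{2}\lambda$. The $\lambda$-scaling step for Theorem~\ref{upper-mRis}(2) is also fine for every $\lambda>0$, and your final bookkeeping for (\ref{second-guarantee}) gives exactly $-(1+\lambda)\,nu(Sh)-\lambda$.

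The gap is in both marginalist bounds, and it is the same slip twice: $OPT$ in Theorem~\ref{upper-mRis} is the minimum-R\'{e}nyi-entropy cover, i.e.\ the witness of the \emph{strictly egalitarian} fairness, not necessarily the cover attaining $Fair_{\lambda}(\Gamma,Sh)$. Lemma~\ref{PQR} with $(P,Q,R)=(RG,OPT,Sh)$ only yields $D_{\lambda}(RG\parallel Sh)\geq D_{\lambda}(OPT\parallel Sh)+[H_{\lambda}(OPT)-H_{\lambda}(RG)]-nu(Sh)$, and by the supremum definition $D_{\lambda}(OPT\parallel Sh)\leq Fair_{\lambda}(\Gamma,Sh)$, so your substitution of $Fair_{\lambda}(\Gamma,Sh)$ for $D_{\lambda}(OPT\parallel Sh)$ goes in the unfavourable direction; likewise your step ``$H_{\lambda}(Sh)-H_{\lambda}(OPT)\geq Fair_{\lambda}(\Gamma,Sh)-nu(Sh)$'' does not follow from the lemma with $P=OPT$ as stated. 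The repair is one line and preserves your structure: let $x^{*}$ be a core cover attaining $Fair_{\lambda}(\Gamma,Sh)$ and note $H_{\lambda}(OPT)\leq H_{\lambda}(x^{*})$ since $OPT$ minimizes entropy over the core. For (\ref{first-guarantee}) apply Lemma~\ref{PQR} with $(P,Q,R)=(RG,x^{*},Sh)$ and use $H_{\lambda}(RG)\leq H_{\lambda}(OPT)+\frac{1}{\lambda-1}\log_{2}\lambda\leq H_{\lambda}(x^{*})+\frac{1}{\lambda-1}\log_{2}\lambda$; for (\ref{second-guarantee}) apply the $P=x^{*}$, $Q=R=Sh$ specialisation to get $H_{\lambda}(Sh)-H_{\lambda}(OPT)\geq H_{\lambda}(Sh)-H_{\lambda}(x^{*})\geq Fair_{\lambda}(\Gamma,Sh)-nu(Sh)$, then feed this into your $\lambda$-scaled rearrangement of Theorem~\ref{upper-mRis}(2). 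With that bridge every inequality you wrote chains correctly and the stated constants are recovered.
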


\begin{proof} Follows directly from Theorem \ref{upper-mRis} and Lemma \ref{PQR}.\qed\end{proof}

We note that for marginalist worst-case fairness the second bound may be slightly better when $\lambda \approx 1$ 
. Indeed, in the limit $\lambda \rightarrow 1$ term $\frac{1}{\lambda-1}\log_{2}(\lambda)$ tends to $\log_{2}(e)\approx 1.442\ldots$, while $\lambda \approx 1$. This shows that the best of the two guarantees in equations~(\ref{first-guarantee}) and (\ref{second-guarantee}) may depend on the precise value of constant $\lambda$ (and, of course, other features of the instance at hand). 

\subsection{Proof of Theorem \ref{upper-mRis}.1}

The basis of our proof is the following result, that relates the problem of computing a cover in an induced subgraph game to
problem MREWO as follows:  
\begin{lemma}
\label{interred} 
Given an induced subgraph game $\Gamma=(G,w)$ the following are true: 
\begin{enumerate} 
\item The cover produced by the ReverseGreedy algorithm coincides with the solution produced by the Greedy algorithm on $\Gamma$, seen as an instance of problem MREWO. 
\item The optimal fractional solution of $\Gamma$ yields an optimal solution of instance $\Gamma$ of problem MREWO and conversely. 
\end{enumerate} 
\end{lemma}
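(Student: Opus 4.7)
The proof splits naturally into the two claims.

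For part 1, I proceed by induction on the number of algorithm iterations. In the first step, ReverseGreedy on the IS game picks $i^{*}\in V$ maximizing $v(V)-v(V\setminus\{i\})$, which by the definition of the IS value function equals the weighted degree $\sum_{e\ni i} w_{e}$ of $i$---exactly the selection criterion used by Greedy on MREWO. The value assigned by ReverseGreedy to $i^{*}$, namely $v(V)-v(V\setminus\{i^{*}\})$, coincides with the total weight that Greedy on MREWO accumulates at $i^{*}$ by orienting all its incident edges inward. After this step, both algorithms recurse on $G[V\setminus\{i^{*}\}]$ with the residual edge-weights---again an IS game---and the inductive hypothesis finishes the argument.

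For part 2, I identify the extreme points of the fractional-cover polytope
\[
P(\Gamma)=\{u\in\mathbb{R}_{+}^{V}\,:\,u(V)=v(V),\ u(S)\geq v(S)\ \forall\, S\subseteq V\}.
\]
Convexity of $v$, established in \cite{deng1994complexity}, lets me invoke the classical greedy characterization of vertices of the base polytope of a supermodular function: each extreme point of $P(\Gamma)$ is obtained from some permutation $\sigma$ of $V$ via $x_{\sigma}(\sigma(k))=v(\{\sigma(1),\dots,\sigma(k)\})-v(\{\sigma(1),\dots,\sigma(k-1)\})$. For IS games this quantity is precisely the total weight of edges joining $\sigma(k)$ to its predecessors in $\sigma$, i.e., the orientation vector obtained by orienting every edge toward its later endpoint in $\sigma$. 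Thus the extreme points of $P(\Gamma)$ are exactly the acyclic-orientation vectors, and in particular form a subset of the orientations considered by MREWO.

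The conclusion follows by observing that, since $u(V)=v(V)$ is a constant on $P(\Gamma)$, minimizing $H_{\lambda}$ over $P(\Gamma)$ reduces (after stripping the monotone logarithm) to extremizing $u\mapsto\sum_{i} u_{i}^{\lambda}$: for $\lambda>1$ this objective is convex and must be maximized, while for $0<\lambda<1$ it is concave and must be minimized (the Shannon case $\lambda=1$ is analogous, using concavity of $-x\log x$). In every case the optimum is attained at an extreme point of $P(\Gamma)$, hence at an acyclic orientation. Combined with the fact that every orientation of $G$ itself lies in $P(\Gamma)$, this yields
\[
\min_{u\in P(\Gamma)} H_{\lambda}(u)\ =\ \min_{\text{orientations } h} H_{\lambda}[h],
\]
giving both directions of the equivalence. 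The main technical obstacle is the extreme-point characterization of $P(\Gamma)$, which I plan to handle by citing the greedy vertex enumeration for supermodular base polytopes (alternatively one could verify directly, via a flow-style decomposition of a fractional cover into per-edge endpoint weights $\alpha_{e,i}+\alpha_{e,j}=w_{e}$, that the vertices of the projected polytope are orientations); the remainder of the argument is routine convex analysis.
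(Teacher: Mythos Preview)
Your argument is correct, and for part~1 it is essentially identical to the paper's. For part~2, however, you take a genuinely different route.

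The paper argues directly via the edge-decomposition of the core: any fractional cover can be written as $z_i=\sum_{(i,j)\in E} r_{i,j}w_{(i,j)}$ with $r_{i,j}+r_{j,i}=1$, $0\le r_{i,j}\le 1$. It then shows, by a local mass-shifting argument using Lemma~\ref{dominate_min_ent}, that in any optimal fractional cover all $r_{i,j}$ must be $0$ or $1$; otherwise moving weight from the smaller endpoint to the larger strictly decreases $H_\lambda$. This immediately identifies the optimum with an orientation, and the converse direction is handled by noting that any orientation yields a cover with the same objective value. Your approach instead invokes the Shapley--Edmonds vertex description of the base polytope of a supermodular function, identifies the extreme points with acyclic-orientation indegree vectors, and then appeals to convexity/concavity of $u\mapsto\sum_i u_i^{\lambda}$ (depending on the range of $\lambda$) to push the optimum to a vertex. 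The sandwich is closed by your observation that every orientation vector lies in $P(\Gamma)$.

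What each buys: the paper's proof is entirely elementary and self-contained (it needs only Lemma~\ref{dominate_min_ent}), and it produces the explicit $r_{i,j}$ representation that is reused later in the analysis of the ReverseGreedy coefficients. Your proof is cleaner conceptually and immediately generalises to any convex game whose marginal vectors one can describe combinatorially, at the price of importing the polymatroid vertex-enumeration theorem. A minor remark: your claim that the extreme points are \emph{exactly} the acyclic-orientation vectors is stronger than you need (the inclusion ``every extreme point is a marginal vector, hence an acyclic orientation'' suffices for the sandwich), and the statement of $P(\Gamma)$ should match the core convention $u(S)\ge v(S)$ used elsewhere in the paper, as you have written it.
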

\begin{proof}
\begin{enumerate} 
\item 
Consider a fixed step $i$ of the ReverseGreedy algorithm for problem
$\Gamma$. At this step ReverseGreedy chooses a vertex $v_{i}$ such that its
choice maximizes the marginal decrease in coalition cost. That is, the 
sum of weights of edges adjacent to $v_{i}$ and another node still in the coalition 
is the largest. 

Formally, at step $i$ we delete vertex $v_{i}\in V_{i-1}$ and obtain
the graph $G_{i}=(V_{i},E_{i})$ with $V_{i}=V_{i-1}\setminus\{v_{i}\}$
and $E_{i}=E(V_{i})$. The marginal decrease to maximize is:
\[
m(v_{i})=\sum_{v\in V_{i-1},v\neq v_{i}}w_{v_{i},v}.
\]
(note that $G_{0}=G$.) 

Similarly, given instance $(G,w)$ of problem MREWO the Greedy algorithm
(discussed at the end of Section~\ref{preliminaries}) chooses at step $i$ a vertex $v_{i}$ 
maximizing the sum of weights of all adjacent, not yet oriented, edges. 
It orients all such edges towards $v_{i}.$ This step is thus formally identical to the 
corresponding choice of ReverseGreedy on instance 
$\Gamma$ of MREIS. 

\item 

Fractional covers $Z=(z_{1},z_{2},\dots,z_{n})$ of an IS game (imputations in the core)
can be characterized as follows:

\begin{align}
z_{i} & =\sum_{(i,j)\in E}r_{i,j}w_{(i,j)}
\label{eq:xi-rij-1}
\end{align}
 where $r_{i,j}$ are real numbers in the range $0\leq r_{i,j}\leq 1$ with $r_{i,j}+r_{j,i}=1.$

This claim is an easy consequence of the characterization of
the core of IS games \cite{deng1994complexity}, and a special case of a more general paradigm \cite{deng1999algorithmic}.

\begin{lemma}
\label{claim-rij}
Let $X=(x_i)_i$ be an optimal fractional cover of instance $\Gamma.$
Without loss of generality reorder the set of vertices so that
$x_{1}\geq x_{2}\geq\cdots\geq x_{n}.$ Then coefficients $r_{i,j}$ from formula (\ref{eq:xi-rij-1}) satisfy: 

\begin{align*}
r_{i,j} & =\begin{cases}
1, &\mbox{ if } i<j\\
0, &\mbox{ if } i>j.
\end{cases}
\end{align*}

for all $i,j\in V.$\end{lemma}
\begin{proof}
Consider an arbitrary edge $(i,j)\in E.$
Assume that $i<j$ (the opposite case is easily handled via relation $r_{i,j}+r_{j,i}=1$)
and $r_{i,j}<1.$ 
Define for notational convenience 
 $0<\epsilon<1$ by $\epsilon = r_{j,i} = 1-r_{i,j}.$ With this choice further 
define $\tilde{X}=(\tilde{x_1}, \dots, \tilde{x_i}, \dots, \tilde{x_j}, \dots, \tilde{x_n})$
where 
$
\tilde{x_{i}}=x_{i}+\epsilon w_{i,j},
$ 
$
\tilde{x_{j}}=x_{j}-\epsilon w_{i,j},
$
and $\tilde{x_{k}}=x_{k}$ for all other $k\neq i,j.$ 
Note that $\tilde{x}$ and $x$ differ just on components $i,j$.
By the previous remark $\tilde{x}$ is still an imputation in the core.  

We now apply Lemma~\ref{dominate_min_ent} and infer that for all 
$\lambda>0$ we have $H_{\lambda}(X)> H_{\lambda}(\tilde{X}),$
which contradicts the hypothesis that $X$ had the lowest R\'{e}nyi entropy.\qed\end{proof} 

The previous result can be interpreted as follows:
any optimal solution $X=(x_{i})_i$ satisfies $x_{i}=\sum_{(i,j)\in E,i<j}w_{(i,j)},$
where the elements of $X$ are listed in nonincreasing order.

With this result in hand, we construct a solution $Y=(y_{1},\dots,y_{n})$
of instance $\Gamma$ of MREWO by orienting all the edges $(i,j)$ with
$i<j$ towards the vertex with a lower index.  
$Y$ arises from an edge orientation $u$ given by $u((i,j))=i$ for all $i<j$.
Thus

\begin{align*}
y_{i} & =\sum_{e\in u^{-1}(i)}w_{e}=\sum_{(i,j)\in E,i<j}w_{(i,j)}=x_{i}.
\end{align*}

Clearly, the objective values of $X$ and $Y$ are the same. 

~ 

Conversely, let $Y=(y_i)_i$ be an optimal cover of an instance $\Gamma$ of
MREWO, with associated orientation $u$ and $y_{i}=\sum_{e\in u^{-1}(i)}w_{e}$. 
We assume that vertices are ordered
to respect $y_{1}\geq y_{2}\geq\cdots\geq y_{n},$ thus if $i<j$ we have $u(i,j)=i.$

From $Y$ we can define a solution $X$ of an instance $\Gamma$ of MREIS
as in equation (\ref{eq:xi-rij-1}) with
\vspace{-0.3cm}
\begin{align*}
r_{i,j} & =\begin{cases}
1, & (i,j)\in u^{-1}(i)\\
0, & \mbox{otherwise}.
\end{cases}
\end{align*}

Therefore 
\[
x_{i}=\sum_{(i,j)\in E}r_{i,j}w_{(i,j)}=\sum_{(i,j)\in u^{-1}(i)}w_{(i,j)}=y_{i},
\]
and solutions $X$ and $Y$ have again the same objective values.

Together these two facts show that optima of a common instance of the two 
problems are the same, and the proof is complete.
\qed
\end{enumerate} 
\end{proof}

To complete the proof of Theorem \ref{upper-mRis}.1, by Theorem \ref{RenyiRG} all we
need to prove is the following 

\begin{lemma} 
 For all IS games $\Gamma=(G,w)$ one can construct a system of
parameters $(Z_{r}^{j})$ from Equation (\ref{first}), witnessing equality $\alpha(\Gamma)=\beta(\Gamma)=1$. 
\end{lemma}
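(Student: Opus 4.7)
The plan is to exploit the fact that for IS games the quantities $a_r^j$ admit a transparent combinatorial description in terms of individual edge weights, and that any optimal cover is induced by an edge orientation (Lemma~\ref{claim-rij}). The witnessing decomposition will then simply assign each edge $e \in E$ to the ReverseGreedy step at which the first of its two endpoints is removed.

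First I would unpack $a_r^j$ by substituting the IS value function $v(S) = \sum_{(i,j)\in E(S)} w_{i,j}$ into (\ref{a-r-j}). A short case analysis gives $a_r^{i_r} = \Delta_r$; for $j \in A_r$ one has $a_r^j = w_{i_r,j}$ (interpreted as $0$ if $(i_r,j) \notin E$); and $a_r^j = 0$ whenever $j$ has already been removed, i.e.\ $j \notin A_{r-1}$. In particular $\Delta_r$ equals the total weight of edges from $i_r$ into the remaining set $A_r$, which is the natural ``marginal edge bundle'' picked up at step $r$.

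Next I would invoke Lemma~\ref{interred}(2) together with Lemma~\ref{claim-rij} to fix an optimal cover $OPT = (X_j)$ that arises from some edge orientation $o : E \to V$, so that $X_j = \sum_{e \,:\, o(e)=j} w_e$. For each edge $e \in E$ let $r(e)$ denote the unique iteration at which the first of its two endpoints is deleted by ReverseGreedy, and set
\begin{equation*}
Z_r^j \;:=\; \sum_{e \in E \,:\, r(e) = r,\; o(e) = j} w_e .
\end{equation*}

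The three required properties then fall out by direct regrouping. Summing over $r$ collects every edge oriented toward $j$ exactly once, giving $\sum_r Z_r^j = X_j$; summing over $j$ collects precisely the edges whose first-removed endpoint is $i_r$, whose total weight is $\Delta_r$; and comparing term-by-term against the explicit form of $a_r^j$ yields $0 \le Z_r^j \le a_r^j$ (for $j = i_r$ we accumulate at most all edges from $i_r$ into $A_r$; for $j \in A_r$ only the single edge $(i_r,j)$ can possibly contribute; for $j$ already removed there is nothing to contribute). The identity $\sum_j Z_r^j = \Delta_r$ forces (\ref{defalpha}) and (\ref{defbeta}) to hold with equality, so $\alpha(\Gamma) = \beta(\Gamma) = 1$. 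The only step requiring any care is the choice of optimal cover itself, since $\alpha$ and $\beta$ are defined over the existence of some optimizer: Lemma~\ref{claim-rij} is what guarantees that an orientation-induced cover is optimal, and it is exactly this orientation structure that makes the per-edge bookkeeping align with the per-vertex marginal quantities $a_r^j$.
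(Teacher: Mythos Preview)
Your proposal is correct and follows essentially the same route as the paper. The paper also first computes $a_r^j$ explicitly (its Lemma~\ref{coeffs}), then views both $OPT$ and $RG$ as edge orientations and assigns each edge $(i_r,j)$ processed at step $r$ either to $Z_r^{i_r}$ or to $Z_r^j$ according to how $OPT$ orients it; your single formula $Z_r^j=\sum_{e:\,r(e)=r,\,o(e)=j} w_e$ is exactly this construction written more compactly, and your invocation of Lemma~\ref{claim-rij} makes explicit the justification (left implicit in the paper) that the integer optimum is orientation-induced.
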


\begin{proof} 
The argument is a 
slight generalization of the analysis
given in \cite{istrate-bonchis2011} for the \emph{ Minimum Entropy
Orientation Problem} \cite{cardinal2008minimum,cardinal2009minimum}. 

\begin{lemma} 

Given any IS game $(G,w)$ we have 

\begin{equation}
a_{r}^{j}=\left\{ 
\begin{array}{ll}
w_{i_{r},j}, & \mbox{ if }i_{r}\neq j,(i_{r},j)\in E,j \in A_{r}\\
\Delta_{r},& \mbox{ if }i_{r}=j\\
0,&\mbox{ otherwise,}
\end{array} \right.
\end{equation}
where $\Delta_{r}$ is the value computed by the algorithm ReverseGreedy at stage $r.$
\label{coeffs}
\end{lemma} 
\begin{proof} 
A simple application of formulas~(\ref{val-is}) and (\ref{a-r-j}): in this case, for any set $S\subseteq V$ 
\[
f(S):=v(S)=\sum_{e\in S\times S} w_{e}.
\]

Therefore $f(A_{r-1})-f(A_{r})$ is the sum of weights $w_{e}$ of edges $e$ between $i_{r}$ and another node in $A_{r}$. 
On the other hand the value of expression $f(A_{r-1}\setminus\{j\})-f(A_{r}\setminus\{j\})$ depends on $j$:
\begin{itemize}
 \item It is zero if $i_{r}=j$, 
 \item It is $f(A_{r-1})-f(A_{r})$ when $j \neq i_r$ and $j\notin A_{r},$
 \item Otherwise, it is equal to the sum of weights $w_{e}$ of edges $e$ between $i_{r}$ and another node in $A_{r}\setminus\{j\}.$ 
In particular this is equal to $f(A_{r-1})-f(A_{r})$ when $j$ is not adjacent to $i_r.$
\end{itemize}
The result follows. \qed
\end{proof}

Lemma~\ref{coeffs} allows us to define a system  of coefficients $(Z_{r}^{j})$ satisfying for every $r:$
\begin{equation}
 \sum_{j}Z_{r}^{j}=\Delta_{r}.
\label{ZDelta}
\end{equation}

Together with the Definition (\ref{Z_alpha_delta}) and Proposition (\ref{ab}) this will witness the fact that $\alpha(\Gamma)=\beta(\Gamma)=1$.

In the construction we will see covers OPT and RG as edge orientations in the weighted graph $(G,w).$ 
We will use the names OPT and RG for these orientations as well. 

To enforce equation (\ref{ZDelta}) we note that $\Delta_{r}$ is the sum of weights of all edges oriented towards $i_r$ in RG. Intuitively we will
redistribute this amount among coefficients $Z_r^j$ with $1\leq j \leq m.$ We do this by comparing orientations OPT and RG.
Edges are considered in the order given by the Greedy algorithm.

\begin{itemize}
\item Start with $Z_{r}^{j}=0$ for all $r$ and $j$. 
\item Run algorithm Greedy that constructs orientation RG, updating coefficients during the algorithm:
\item {\bf At each stage $r$:} after choice of vertex $i_{r}$ we consider edges $(i_{r},j)$ oriented by Greedy towards $i_r.$ There are two possibilities:
\begin{enumerate}
\item $(i_{r},j)$ is oriented towards $i_{r}$ in both RG and OPT. We set $Z_{r}^{i_r}=Z_{r}^{i_r}+w_{i_{r},j}.$ 
\item $(i_{r},j)$ is oriented differently in OPT and RG. We let $Z_{r}^{j}=w_{i_{r},j}(=a_r^j)$ for such edges. 
\end{enumerate}
Note that the total weight assigned at stage $r$ is $\Delta_{r}$($=a_{r}^{i_{r}}$ according to Lemma \ref{coeffs}). 

Hence the inequality $0 \leq Z_r^j \leq a_r^j$ is true for $j=i_r$ as well.\qed

~

This completes the proof of Theorem \ref{upper-mRis}.1.\qed
\end{itemize}
\end{proof}

\subsection{Proof of Theorem \ref{upper-mRis}.2}

\begin{proof}
Let $\overrightarrow{G}$ be an orientation of $G=(V,\, E)$ of minimal
R\'{e}nyi entropy. Denote by $OPT=(q_i)_i$ the indegree distribution
$q_{i} =\frac{v_{\overrightarrow{G}}(i)}{W},$
 where $v_{\overrightarrow{G}}(i)$ (as in Definition \ref{bias-orientation})
is the sum of weights of all edges oriented in $\overrightarrow{G}$
towards vertex $i\in V,$ and $W$ is the sum of all edge weights. 

The R\'{e}nyi entropy of $OPT$ expands as follows:

\begin{align*}
H_{\lambda}(OPT)= & \frac{1}{1-\lambda}\log_{2}\big(\sum_{i\in V}(q_{i})^{\lambda}\big)=\frac{1}{1-\lambda}\log_{2}\left(\sum_{i\in V}\left[\frac{v_{\overrightarrow{G}}(i)}{W}\right]^{\lambda}\right)\\
= & \frac{1}{1-\lambda}\log_{2}\left(\sum_{i\in V}\frac{v_{\overrightarrow{G}}(i)}{W}\left[\frac{v_{\overrightarrow{G}}(i)}{W}\right]^{\lambda-1}\right)\\
= &\frac{1}{1-\lambda}\log_{2}\left(\frac{1}{W}\sum_{(i,j)\in\overrightarrow{G}}w_{i,j}\left[\frac{v_{\overrightarrow{G}}(i)}{W}\right]^{\lambda-1}\right)
\end{align*}

Taking into account that $x^{\lambda-1}$ is decreasing for any $0\leq\lambda<1$ we infer

\begin{align*}
H_{\lambda}(OPT)\geq & \frac{1}{1-\lambda}\log_{2}\left(\frac{1}{W}\sum_{(i,j)\in\overrightarrow{G}}w_{i,j}\left[\frac{max\{v(i),v(j)\}}{W}\right]^{\lambda-1}\right)
\end{align*}

The inequality is true for any $\lambda>1$ as well, as $x^{\lambda-1}$
is now increasing but we multiply with negative constant $\frac{1}{1-\lambda}.$

Let $G^{\flat}$ be a biased orientation. Thus

\[
v_{G^{\flat}}(i)=\sum_{\begin{array}{c}
(i,j)\in E\\
v(i)>v(j)
\end{array}}w_{i,j}.
\]
Let $BI=(q_i^{\flat})_i$ be its indegree distribution. By the definition of biasedness we have:

\begin{align*}
H_{\lambda}(OPT) \geq & \frac{1}{1-\lambda}\log_{2}\left(\frac{1}{W}\sum_{(i,j)\in\overrightarrow{G}}w_{i,j}\left[\frac{max\{v(i),v(j)\}}{W}\right]^{\lambda-1}\right)\\
= & \frac{1}{1-\lambda}\log_{2}\left(\frac{1}{W}\sum_{(i,j)\in E}w_{i,j}\left[\frac{max\{v(i),v(j)\}}{W}\right]^{\lambda-1}\right)\\
= & \frac{1}{1-\lambda}\log_{2}\left(\frac{1}{W}\sum_{(i,j)\in G^{\flat}}w_{i,j}\left[\frac{v(i)}{W}\right]^{\lambda-1}\right)\\
= & \frac{1}{1-\lambda}\log_{2}\left(\sum_{i\in V}\frac{v_{G^{\flat}}(i)}{W}\left[\frac{v(i)}{W}\right]^{\lambda-1}\right)= \\
= & \frac{1}{1-\lambda}\log_{2}\left(\sum_{i\in V}q_{i}^{\flat}\left[\frac{v(i)}{W}\right]^{\lambda-1}\right)
\end{align*}

Deng and Papadimitriou \cite{deng1994complexity} proved that the Shapley value
of an induced subgraph game is $s(i)=\frac{1}{2}\sum_{i\neq j}w_{i,j}=\frac{v(i)}{2}.$
Thus, the Shapley distribution $Sh=(s_i)_i$ of such a game is $s_{i}=\frac{v(i)}{2W}=\frac{s(i)}{W}.$ 

Hence:

\begin{align*}
H_{\lambda}(OPT) \geq & \frac{1}{1-\lambda}\log_{2}\left(\sum_{i\in V}q_{i}^{\flat}\left[\frac{2W\cdot s_{i}}{W}\right]^{\lambda-1}\right)\\
= & \frac{1}{1-\lambda}\log_{2}\left(\sum_{i\in V}q_{i}^{\flat}\left(s_{i}\right)^{\lambda-1}\right)-1
\end{align*}

The difference between entropies of the optimal and Shapley distribution can be written as follows:

\begin{align*}
& H_{\lambda}(q)-H_{\lambda}(s)\geq \frac{1}{1-\lambda}\log_{2}\left(\sum_{i\in V}q_{i}^{\flat}\left(s_{i}\right)^{\lambda-1}\right)-1-\frac{1}{1-\lambda}\log_{2}\left(\sum_{i\in V}s_{i}^{\lambda}\right)\\
& = \left[\frac{1}{1-\lambda}\log_{2}\sum_{i\in V}q_{i}^{\flat}\left(s_{i}\right)^{\lambda-1}+\frac{1}{\lambda}\log_{2}\sum_{i\in V}s_{i}^{\lambda}-\frac{1}{\lambda(1-\lambda)}\log_{2}\sum_{i\in V}\left(q_{i}^{\flat}\right)^{\lambda}\right]\\
& + \frac{1}{\lambda(1-\lambda)}\log_{2}\sum_{i\in V}\left(q_{i}^{\flat}\right)^{\lambda}-\frac{1}{\lambda}\log_{2}\sum_{i\in V}s_{i}^{\lambda}-\frac{1}{1-\lambda}\log_{2}\sum_{i\in V}s_{i}^{\lambda}-1\\
& = h_{\lambda}[BI,Sh]+\frac{1}{\lambda}H_{\lambda}(BI)-\frac{1}{\lambda}H_{\lambda}(Sh)-1
\end{align*}

Applying the discrete Gibbs Lemma we infer
\begin{align*}
H_{\lambda}(q)-H_{\lambda}(Sh) & \geq\frac{1}{\lambda}\left(H_{\lambda}(BI)-H_{\lambda}(Sh)\right)-1
\end{align*}
 which is what we had to prove, up to multiplication with -1.\qed
\end{proof}

\section*{Conclusions}

The main contribution of this paper was to propose a parametric 
family of measures of worst-case fairness in cooperative settings. Besides the analogy 
with the Cowell-Kuga measures of inequality, a partial possible justification of the parametric nature of our family arises from Corollary~\ref{foo}: depending on the particular values of $\lambda$ different approximation algorithms may provides the better approximation. 

Our analysis raises many open questions:
\begin{itemize}
\item Prove that maximizing marginalist worst-case fairness of IS games is NP-complete. We attempted to adapt the approach of \cite{cardinal2008minimum} to yield graphs $G$ that are regular (and thus the Shapley value of the IS game on $G$ coincides with the uniform distribution). However the reduction from \cite{cardinal2008minimum} seems to produce graphs that are inherently non-regular (more precisely, they have vertices with two possible degrees, 3 and 4). Further attempts at modifying the gadgets employed in this construction were unsuccessful. Nevertheless, we believe that maximizing marginalist worst-case fairness of IS games is NP-complete as well. 
\item Investigate the tightness of the upper bounds for approximating marginalistic and other worst-case fairness measures in the case of IS games. Our result showed that an additive upper bound could be obtained at least for marginalist worst-case fairness. We don't have, however, any results concerning the hardness of approximation for this measure.    
\item Extend the study of these measures to other convex cooperative games.
\end{itemize} 

\section*{Acknowledgments} 
Both authors contributed in a substantially equal manner to this work.

The first author has been supported by a project on Postdoctoral national
programs, contract CNCSIS PD\_575/2010.

The corresponding author has been supported by a project on Postdoctoral
programs for sustainable development in a knowledge-based society,
contract POSDRU/89/1.5/S/60189, co-financed by the European Social  
Fund via the Sectorial Operational Program for Human Resource Development
2007-2013.


\section*{Appendix:}
\subsection*{Proof of Lemma~\ref{positiveRenyiRelative}}

We will apply the classical H\H{o}lder inequality:

\begin{align*}
\sum_{i}f_{i}g_{i} & \leq\left(\sum_{i}f_{i}^{t}\right)^{\frac{1}{t}}\left(\sum_{i}g_{i}^{r}\right)^{\frac{1}{r}}
\end{align*}

where $(f_{k})_{k},(g_{k})_{k}\in\mathbb{R}_{+},$ 
and $t,r>1$ with $1/t+1/r=1.$  There are two cases: 

Case $\lambda>1:$ Setting $t=\lambda>1$, $r=\frac{\lambda}{\lambda-1},$
$f_{i}=p_{i},$ $g_{i}=q_{i}^{\lambda-1}$ we obtain:

\begin{eqnarray*}
\sum_{i\in V}q_{i}^{\lambda-1}p_{i} & \leq & \left(\sum_{i\in V}p_{i}^{\lambda}\right)^{\frac{1}{\lambda}}\left(\sum_{i\in V}\left(q_{i}^{\lambda-1}\right)^{\frac{\lambda}{\lambda-1}}\right)^{\frac{\lambda-1}{\lambda}}
\end{eqnarray*}

Further, taking the logarithm and dividing with the negative number $1-\lambda$ we get:

\begin{align}
\frac{1}{1-\lambda}\log\left(\sum_{i\in V}q_{i}^{\lambda-1}p_{i}\right)\geq & \frac{1}{1-\lambda}\log\left[\left(\sum_{i\in V}p_{i}^{\lambda}\right)^{\frac{1}{\lambda}}\left(\sum_{i\in V}q_{i}^{\lambda}\right)^{\frac{\lambda-1}{\lambda}}\right]\nonumber \\
= & \frac{1}{\lambda(1-\lambda)}\log\sum_{i\in V}p_{i}^{\lambda}-\frac{1}{\lambda}\log\left(\sum_{i\in V}q_{i}^{\lambda}\right)\label{eq:logRenyi}
\end{align}

Case $0<\lambda<1:$ We now take $t=\frac{1}{\lambda}>1,$
$r=\frac{1}{1-\lambda}>1,$$f_{i}=\left(q_{i}^{\lambda-1}p_{i}\right)^{\lambda}$
and $g_{i}=q_{i}^{\lambda(1-\lambda)},$ therefore:

\begin{eqnarray*}
\sum_{i\in V}p_{i}^{\lambda} & = & \sum_{i\in V}\left(q_{i}^{\lambda-1}p_{i}\right)^{\lambda}q_{i}^{\lambda(1-\lambda)}\\
 & \leq & \left[\sum_{i\in V}\left(\left(q_{i}^{\lambda-1}p_{i}\right)^{\lambda}\right)^{\frac{1}{\lambda}}\right]^{\lambda}\left[\sum_{i\in V}\left(q_{i}^{\lambda(1-\lambda)}\right)^{\frac{1}{1-\lambda}}\right]^{1-\lambda}\\
 & = & \left[\sum_{i\in V}q_{i}^{\lambda-1}p_{i}\right]^{\lambda}\left[\sum_{i\in V}q_{i}^{\lambda}\right]^{1-\lambda}
\end{eqnarray*}

equivalently
\begin{eqnarray*}
\sum_{i\in V}q_{i}^{\lambda-1}p_{i} & \geq & \left(\sum_{i\in V}p_{i}^{\lambda}\right)^{\frac{1}{\lambda}}\left(\sum_{i\in V}q_{i}^{\lambda}\right)^{\frac{\lambda-1}{\lambda}}
\end{eqnarray*}

Taking the logarithm and dividing by $1-\lambda$ we obtain
the inequality. 
The result can be extended to case $\lambda = 1$ as well by taking the limit.\qed

\subsection*{Proof of Lemma~\ref{PQR}}

\begin{proof} 

From the definition we have
\begin{eqnarray*}
& & D_{\lambda}(P\parallel R)-D_{\lambda}(Q\parallel R) =\\ 
& = & \frac{1}{\lambda-1}\log_{2}\big(\sum_{i\in V}p_{i}^{\lambda}(r_{i})^{1-\lambda}\big)-\frac{1}{\lambda-1}\log_{2}\big(\sum_{i\in V}q_{i}^{\lambda}(r_{i})^{1-\lambda}\big)
\end{eqnarray*}

We will reason separately in cases $0<\lambda<1$ and $\lambda > 1.$ The result will however be the same because function 
$x^{1-\lambda}$ is increasing in the first case (decreasing in the second) while $\lambda-1$ is negative (positive).

Using inequality $r_{min}\leq r_i \leq r_{max}$ we infer 

\[
\frac{1}{\lambda-1}\log\big(\sum_{i\in V}p_{i}^{\lambda}(r_{max})^{1-\lambda}\big) \leq D_{\lambda}(P\parallel R) \leq \frac{1}{\lambda-1}\log\big(\sum_{i\in V}p_{i}^{\lambda}(r_{min})^{1-\lambda}\big)
\]
therefore:

\begin{eqnarray*}
&&D_{\lambda}(P\parallel R)-D_{\lambda}(Q\parallel R) \geq\\
& \geq & \frac{1}{\lambda-1}\log\big(\sum_{i\in V}p_{i}^{\lambda}(r_{max})^{1-\lambda}\big)-\frac{1}{\lambda-1}\log\big(\sum_{i\in V}q_{i}^{\lambda}(r_{min})^{1-\lambda}\big)\\
 & = & \frac{1}{\lambda-1}\log\big((r_{max})^{1-\lambda}\sum_{i\in V}p_{i}^{\lambda}\big)-\frac{1}{\lambda-1}\log\big((r_{min})^{1-\lambda}\sum_{i\in V}q_{i}^{\lambda}\big)\\
 & = & -\log\big(r_{max}\big)-\frac{1}{1-\lambda}\log\big(\sum_{i\in V}p_{i}^{\lambda}\big)+\log\big(r_{min}\big)+\frac{1}{1-\lambda}\log\big(\sum_{i\in V}q_{i}^{\lambda}\big)\\
 & = & H_{\lambda}(Q)-H_{\lambda}(P)-nu(R)
\end{eqnarray*}

The other inequality is proved similarly.\qed
\end{proof}

\subsection*{Proof of Theorem~\ref{lower-meis}}

The problem is easily seen to be interreducible to problem MREIS: We have $Fair_{\lambda}(\Gamma,U)\geq \eta$ if and only if there exists 
a cover $x$ of $\Gamma$, seen as an instance of MREIS such that $H_{\lambda}(x)\leq \log_{2}(n)-\eta$. Therefore it is enough to 
prove that for any $\lambda >0$ problem MREIS is NP-complete. 

Next, Lemma~\ref{interred} interreduces in effect problems MREWO and MREIS. 
For $\lambda = 1$ problem MREWO has been proved to be NP-complete \cite{cardinal2008minimum}, even in the unweighted version ($w_{e}=1$ for all $e\in E(G)$).

The proof of this result encodes an instance $\Phi$ of an NP-complete version of $1$-in-$k$ SAT into a graph $G=(V,E)$ and an orientation $\overrightarrow{G}$ of 
whose in-degree distribution dominates all other probability distributions 
arising from edge orientations of $G.$ We may read the 
existence of a satisfying assignment for $\Phi$ from the entropy of orientation 
$\overrightarrow{G}$. 

The proof can be easily adapted to the general case $\lambda >0$ by using Lemma \ref{dominate_min_ent} which is the generalization of the result 
employed in \cite{cardinal2008minimum} to all values $\lambda > 0$. The basis is the following modification of Claim 2 in \cite{cardinal2008minimum},
 with essentially the same proof as the original version. 

\begin{lemma}
For any $\lambda > 0$, $\lambda \neq 1$, the R\'enyi entropy of the distribution corresponding to orientation $\overrightarrow{G}$ 
(constructed in \cite{cardinal2008minimum}) is at most 
\begin{eqnarray*}
\log_{2}m-\frac{1}{1-\lambda}\log_{2}\left(4^{\lambda}+7\cdot3^{\lambda-1}+1\right)/12
\end{eqnarray*}

with equality if and only if instance $\Phi$ is satisfiable. 
\end{lemma}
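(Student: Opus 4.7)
The statement is the direct lift of Claim~2 of \cite{cardinal2008minimum} from $\lambda=1$ to every $\lambda>0,\lambda\neq 1$. The Cardinal--Fiorini--Joret reduction from $1$-in-$k$ SAT already produces the graph $G=(V,E)$ on $m$ vertices and the orientation $\overrightarrow{G}$, so we need only re-run the arithmetic with the parameter $\lambda$ kept general. The essential conceptual ingredient is that our Lemma~\ref{dominate_min_ent}, which holds for every $\lambda>0$, plays the role that the Alon--Orlitsky fact ``stochastic domination strictly lowers Shannon entropy'' plays in the $\lambda=1$ argument.

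\textbf{Step 1 (recall the construction).} The graph $G$ is partitioned into small, identical gadgets, one per clause of $\Phi$, and $\overrightarrow{G}$ is the orientation whose sorted indegree vector stochastically dominates that of every other orientation of $G$. By Lemma~\ref{dominate_min_ent}, this makes $\overrightarrow{G}$ the minimum R\'enyi entropy orientation at every $\lambda>0$. The gadget is engineered so that the multiset of indegrees it receives in $\overrightarrow{G}$ attains a specific canonical form exactly when the associated clause is satisfied.

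\textbf{Step 2 (the arithmetic).} Summing $\sum_i p_i^{\lambda}$ gadget by gadget, using the canonical per-gadget indegree multiset, produces after normalization a per-gadget contribution of $(4^{\lambda}+7\cdot 3^{\lambda-1}+1)/12$. Substituting into the R\'enyi entropy formula $H_{\lambda}(\overrightarrow{G})=\frac{1}{1-\lambda}\log_2(\sum_i p_i^{\lambda})$ and simplifying yields the announced bound $\log_2 m - \frac{1}{1-\lambda}\log_2\bigl((4^{\lambda}+7\cdot 3^{\lambda-1}+1)/12\bigr)$.

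\textbf{Step 3 (the iff).} If $\Phi$ is satisfiable, a satisfying assignment $\sigma$ induces an orientation in which every gadget attains its canonical indegree profile simultaneously, and Step~2 becomes an equality. Conversely, if $\Phi$ is not satisfiable then no orientation can force every gadget into its canonical profile at once; the minimum-entropy orientation $\overrightarrow{G}$ therefore has, inside at least one gadget, an indegree vector obtained from the canonical one by an elementary transfer of mass between two coordinates. Lemma~\ref{dominate_min_ent} shows that any such transfer strictly changes $H_{\lambda}$ and, in the direction forced by stochastic domination on $\overrightarrow{G}$, strictly lowers it; hence $H_{\lambda}(\overrightarrow{G})$ is strictly below the bound. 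Combining the two directions, equality holds iff $\Phi$ is satisfiable.

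\textbf{Main obstacle.} The only non-routine work is the gadget-level bookkeeping in Step~2: verifying that the canonical per-gadget indegree multiset of \cite{cardinal2008minimum}, after normalization across gadgets, really collapses to the factor $(4^{\lambda}+7\cdot 3^{\lambda-1}+1)/12$. Everything else is either a verbatim transcription from \cite{cardinal2008minimum} or an appeal to Lemma~\ref{dominate_min_ent} in its full R\'enyi generality.
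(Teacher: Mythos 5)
Your proposal follows essentially the same route as the paper, which likewise proves this lemma only by reference: it invokes the gadget construction and Claim~2 of \cite{cardinal2008minimum} verbatim and substitutes Lemma~\ref{dominate_min_ent} for the Shannon-entropy domination fact of \cite{AlonOrlitsky_sourcecoding}, exactly as you do in Steps~1 and~3. Note that the per-gadget arithmetic you defer in Step~2 (checking that the canonical indegree profile yields the factor $\left(4^{\lambda}+7\cdot 3^{\lambda-1}+1\right)/12$) is also not carried out in the paper, so your write-up is at the same level of completeness as the paper's own treatment.
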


The conclusion of this argument is that computing lower bounds on 
$Fair_{\lambda}(\Gamma,U)$ is NP-complete (the fact that they the problem is in NP is trivial). \qed

\end{document}